\newcommand{\subst}[2]{\left[{#2}\middle/{#1}\right]}
\newcommand{\lsubst}[2]{\ltimes\subst{#1}{#2}}
\newcommand{\csubst}[2]{\times\subst{#1}{#2}}
\newcommand{\pair}[1]{\langle {#1} \rangle}
\newcommand{\vect}[1]{\overrightarrow{#1}}
\newcommand{\bbN}[0]{\mathbb{N}}
\newcommand{\cC}[0]{\mathcal{C}}
\newcommand{\cD}[0]{\mathcal{D}}
\newcommand{\cE}[0]{\mathcal{E}}
\newcommand{\cF}[0]{\mathcal{F}}
\newcommand{\cN}[0]{\mathcal{N}}
\newcommand{\cP}[0]{\mathcal{P}}
\newcommand{\cQ}[0]{\mathcal{Q}}
\newcommand{\cR}[0]{\mathcal{R}}
\newcommand{\cS}[0]{\mathcal{S}}
\newcommand{\cT}[0]{\mathcal{T}}
\newcommand{\CR}[0]{\cC\cR}
\newcommand{\PCR}[0]{\cP\cC\cR}
\newcommand{\SN}[0]{\cS\cN}
\newcommand{\NT}[0]{\cN\cT}
\newcommand{\subCR}[3]{{{#1}_{#2}^{#3}}}
\newcommand{\Red}[1]{\mathsf{Red}_{#1}}
\newcommand{\derives}[0]{\mathrel{\triangleright}}
\newcommand{\tinsp}[0]{\iota}
\newcommand{\red}[0]{\mathrel{\leadsto}}
\newcommand{\qrefl}[0]{\mathbf{r}}
\newcommand{\qsym}[0]{\mathbf{s}}
\newcommand{\qtrans}[0]{\mathbf{t}}
\newcommand{\qapp}[0]{\mathbf{app}}
\newcommand{\qlam}[0]{\mathbf{lam}}
\newcommand{\qlet}[0]{\mathbf{let}}
\newcommand{\qti}[0]{\mathbf{ti}}
\newcommand{\qtr}[0]{\mathbf{trpl}}
\newcommand{\qoti}[0]{\mathbf{ti}}
\newcommand{\qotr}[0]{\mathbf{trpl}}
\newcommand{\qiti}[0]{\mathbf{iti}}
\newcommand{\qitr}[0]{\mathbf{itb}}
\newcommand{\qba}[0]{{\bm{\beta}}}
\newcommand{\qbb}[0]{{\bm{\beta_\Box}}}
\newcommand{\qd}[0]{\mathbf{d}}
\newcommand{\AU}[1]{\left\llbracket{#1}\right\rrbracket}
\newcommand{\code}[0]{{\mathrm{cd}}}
\newcommand{\src}[0]{{\mathrm{src}}}
\newcommand{\tgt}[0]{{\mathrm{tgt}}}
\newcommand{\CTX}{\Delta;\Gamma}
\def\orelse{\mathbin{|}}
\newcommand{\tchk}[4]{{#1} \vdash {#2} : {#3} \orelse {#4}}
\newcommand{\tchkh}[4]{{#1} \vdash {#2} : {#3} \orelse {#4}}
\newcommand{\qchk}[3]{{#1} \vdash {#2} : {#3}}
\newcommand{\qchkh}[3]{{#1} \vdash {#2} : {#3}}
\newcommand{\Eq}{\mathsf{Eq}}
\newcommand{\bang}{\mathord{!}}
\newcommand{\imp}{\supset}
\newcommand{\direq}{\mathrel{\overset{\triangleright}{=}}}
\newcommand{\equest}{\mathop{\overset{?}{=}}}
\newcommand{\Intro}{{I}}
\newcommand{\Elim}{{E}}
\DeclareMathOperator{\tlet}{let}
\DeclareMathOperator{\dom}{dom}
\def\iftrm{\mathsf{if}}
\newcommand{\JLh}{{\ensuremath{\lambda^h}}\xspace}
\newcommand{\JLr}{{\ensuremath{\lambda^{hc}}}\xspace}
\newcommand{\JLs}{{\ensuremath{\lambda^{hs}}}\xspace}
\newcommand{\impp}{\supset}
\def\rVS{\noalign{\vskip 2ex plus.4ex minus.4ex}}
\title{Strongly Normalizing Audited Computation}
\author[1]{Wilmer Ricciotti}
\author[1]{James Cheney}
\affil[1]{Laboratory for Foundations of Computer Science, University
  of Edinburgh\\
  \texttt{[wricciot|jcheney]@inf.ed.ac.uk}}
\authorrunning{W. Ricciotti and J. Cheney} 
\subjclass{F.4.1 Mathematical Logic}
\keywords{Lambda calculus, Justification Logic, Strong Normalization, Audited Computation}
\begin{document}

\maketitle

\begin{abstract}
  Auditing is an increasingly important operation for computer
  programming, for example in security (e.g. to enable history-based
  access control) and to enable reproducibility and accountability
  (e.g.  provenance in scientific programming).  Most proposed
  auditing techniques are ad hoc or treat auditing as a second-class,
  extralinguistic operation; logical or semantic foundations for
  auditing are not yet well-established. \emph{Justification Logic}
  (JL) offers one such foundation; Bavera and Bonelli introduced a
  computational interpretation of JL called \JLh that supports
  auditing. However, \JLh is technically complex and strong
  normalization was only established for special cases.  In addition,
  we show that the equational theory of \JLh is inconsistent.
 We introduce a new calculus \JLr that is simpler than \JLh, consistent, and
 strongly normalizing.  Our proof of strong normalization is
 formalized in Nominal Isabelle.
\end{abstract}

\section{Introduction}\label{sec:intro}
Auditing is of increasing importance in many areas, including in
security (e.g. history-based access control~\cite{Abadi2003},
evidence-based audit~\cite{vaughan08csf,jia08icfp}) and to provide
reproducibility or accountability for scientific computations
(e.g. provenance tracking~\cite{Acar2013}).  To date, most approaches
to auditing propose to instrument programs in some \emph{ad hoc} way,
to provide a record of the computation, sometimes called
\emph{provenance}, \emph{audit log}, \emph{trace}, or \emph{trail},
which is typically meant to be inspected after the computation
finishes.  Foundations of provenance or auditing have not been fully
developed, and auditing typically remains a second-class citizen (or
even an alien concept), in that trails are not accessible to the
program itself during execution, but only to an external monitoring
layer.  We seek logical and semantic foundations for
\emph{self-auditing programs} that provide first-class recording and
auditing primitives.

In this paper, we explore how ideas from \emph{justification
  logic}~\cite{Artemov2008} (a generalization of the \emph{Logic of
  Proofs}~\cite{Artemov2001}) can provide such a foundation.
\emph{Justification Logic} is the general name for a family of logics
that refine the epistemic reading of modal necessity $\Box A$ (``$A$
is known'') by
indexing such formulas with \emph{justifications}, or (descriptions
of) evidence that $A$ holds.  So, the formula $\AU{a}{A}$ can be read
as ``justification $a$ is evidence (proof) establishing $A$''.  In
elementary forms of justification logic, the justifications consist of
uninterpreted constants that can be combined with algebraic operations
(e.g. formal sum and product). A representative axiom of justification
logic is the evidence-aware version of Modus Ponens:
\[\AU{a}{(A\impp B)} \impp \AU{b}{A} \impp \AU{a\cdot b}{B}
\]
which can be read as ``if $a$ proves $A\impp B$ and $b$ proves $A$
then $a$ together with $b$ proves $B$''. Another representative axiom
is the reflection principle:
\[\AU{a}{A} \impp \AU{!a}{\AU{a}{A}}\]
which can be read as ``if $a$ proves $A$ then $!a$ proves that $a$
proves $A$'', where $!$ is a \emph{reflection} operator on
justifications.  Notice that in both cases, if we ignore the evidence
annotations $a,b$, we get standard axioms ``K'' and ``4'' of modal
logic respectively~\cite{Lewis1959}.

This paper explores using (constructive) justification logic as a
foundation for programming with auditing, building on of previous work
on calculi for modal logic~\cite{Pfenning2001} and justification
logic~\cite{Artemov2007,Bavera2015}.  Artemov and
Bonelli~\cite{Artemov2007} introduced the \emph{intensional lambda
  calculus} $\lambda^\mathbf{I}$, which builds upon Pfenning and
Davies' type theory for modal logic~\cite{Pfenning2001} by recording
\emph{proof codes} in judgements and modalities; proof codes are a
special case of proof terms.  In $\lambda^\mathbf{I}$, ``trails'' that
witness the equivalence of proof codes were introduced, but these
trails could not be inspected by other constructs.  More recently, Bavera and Bonelli~\cite{Bavera2015} introduced \JLh,
an extension of $\lambda^\mathbf{I}$ that allowed \emph{trail
  inspection}, which can audit a computation's trail of evaluation
from the currently-enclosing box constructor up to the point of
inspection.  

Trail inspection is a first-class construct of \JLh, and can be
performed at arbitrary program points (including inside another trail
inspection).  Bavera and Bonelli outlined how this construct can be
used to implement \emph{history-based access control}~\cite{Abadi2003}
policies in which access control decisions are made based on the
complete previous call history.  Trail inspection could also be used
for profiling, for example by counting the number of calls of each
function in a subcomputation, or for more general
auditing~\cite{amir-mohamedian2016}, for example by comparing the
trails of multiple computations to look for differences or anomalies
in their execution history.  Furthermore, all of these applications
can be supported in a single language, rather than requiring ad hoc
changes to accommodate different monitoring semantics.  (It is worth
noting that these applications typically use data structures such as
sets or maps that are not provided in \JLh; however, they are straightforward
to add.)

Bavera and Bonelli proved basic metatheoretic properties of
\JLh including subject reduction, and gave a partial proof of strong
normalization.  The latter proof considered only terms in which there
was no nesting of box constructors.  Although they conjectured that
strong normalization holds for the full system, the general case was
left as an open question. 
However, \JLh is a somewhat complex system, due to subtle aspects of
the metatheory of trail inspection.
Trails are viewed as proof terms witnessing equivalence of proof
codes.  Proof codes include trail inspections, whose results depend on
the history of the term being executed.  
\JLh's trails are viewed as equality proofs and are thus
subject to symmetry: operationally, the language provides a trail
constructor $\qsym(q)$ that converts a trail $q$ of type $s = t$ to
one of type $t = s$. Since the reduct of an outer trail inspection
depends on a non-local, volatile context (the trail declared in the
nearest outer box constructor) symmetry makes it possible to equate any two codes
of the same type: in other words, the theory of trails becomes
inconsistent.

To avoid this problem, Bavera and Bonelli annotated trail inspections
with affine trail variables $\alpha, \alpha'$ representing the
volatile context, and trail inspection is then denoted by
$\alpha\theta$, where $\theta$ is a collection of functions specifying
a structural recursion over the trail. 
Nevertheless, in \JLh we can prove any two codes $s$ and $t$
equivalent.  A detailed example, along with the rules of \JLh, are
presented in Appendix~\ref{app:jlh}; here, we sketch the derivation:
\begin{center}\small
  \begin{prooftree}
    \AxiomC{$s = q_1\theta$}
    \AxiomC{$q_1\theta = \alpha\theta$}
    \BinaryInfC{$s = \alpha\theta$}
    \AxiomC{$t = q_2\theta$}
    \AxiomC{$q_2\theta = \alpha\theta$}
    \BinaryInfC{$t = \alpha\theta$}
    \UnaryInfC{$\alpha\theta = t$}
    \RightLabel{(*)}
    \BinaryInfC{$s = t$}
  \end{prooftree}
\end{center}
where $q_1$, $q_2$ and $\theta$ are chosen so that $s = q_1\theta$ and
$t = q_2\theta$, which is possible whenever
$s$ and $t$ have the same type.
The problem
seems to originate from the transitivity rule marked with (*), which
constitutes an exception to the affineness of trail variables, since
the same proof code must necessarily appear as the right-hand side of
the first premise and as the left-hand side of the second one. This
allows trail inspection on $\alpha$ to be triggered twice.  It should
be noted that this property does \emph{not} contradict any proved
results by Bavera and Bonelli; instead, it can be viewed as an
(undesirable) form of proof-irrelevance: that is,
$\AU{s}A \to \AU{t}A$ is inhabited for any well-formed proof codes
$s,t$.  Of course, if this is the case then the stated justification
$s$ for $\AU{s}A$ need have nothing with to do its actual proof, defeating
the purpose of introducing the proof codes in the first place.



We present a new calculus, \JLr, that overcomes these issues with
\JLh, and provides a 
more satisfactory foundation for
audited computation, including the following contributions:
\begin{itemize}
\item We dispense with \JLh's trail variables, instead adopting a
  trail inspection construct $\tinsp(\theta)$ that can be used
  anywhere in the program. We also adjust \JLh's treatment of trails
  by no longer viewing them as witnesses to proof code \emph{equality}
  but rather to (directed) \emph{reducibility}.  At a technical level,
  this simply consists of dropping the symmetry axiom and trail
  form. Viewing trails as asymmetric reducibility proofs means that it
  is no longer inconsistent to relate a trail inspection redex with
  all of its possible contracta nondeterministically, and it also
  obviates the need for trail variables (or ordinary variables) to be
  affine; once we lift these restrictions, however, the rationale for
  trail variables evaporates. We present the improved system, and
  summarize its metatheory (including a proof that its equational
  theory is consistent) in Section~\ref{sec:calculus}.
\item To show that $\JLr$ is well-behaved, we prove strong
  normalization (SN). Our proof of SN actually shows a stronger
  result: SN holds even if we conservatively assume that trail
  inspection operations can choose \emph{any} well-formed trail; this
  is a useful insight because it means that we can extend the
  result to more realistic languages with more complex trails. We
  summarize this proof, which we fully formalized
  in Nominal Isabelle\footnote{The full formalization can be found at \url{http://www.wilmer-ricciotti.net/FORMAL/ccau.tgz}. A more detailed discussion can be found in the extended version of this paper.}
\end{itemize}

Section~\ref{sec:related} discusses related work: in particular, we
argue that our approach can be used to prove strong normalization for
\JLh as well, even in the presence of the symmetry rule.


\section{A history-aware calculus}
\label{sec:calculus}
\subsection{Overview}
In this section we describe \JLr, an extension of the simply-typed
lambda calculus allowing for several forms of auditing, thanks to an
internalized notion of \emph{computation history}.  It is based in
many respects on \JLh, but since our preferred notation differs in
important details we give a self-contained presentation.

Operationally, we add to the simply typed $\lambda$-calculus a new term form $\bang_q M$, where $M$ is
an expression and $q$ is a \emph{trail} containing a symbolic
description of the past history of $M$. History-aware expressions will
be given their own type: if $M$ has type $A$, and $q$ is an
appropriate trail showing how $M$ has reduced, then the type of
$\bang_q M$ will be indicated by $\AU{s} A$; the symbol `$\bang$' is
called ``bang''.
In our language, we will initially evaluate expressions in the form $\bang~M$, where the absence of a subscript indicates that no computation has happened within $M$ yet. As we reduce $M$, the bang will be annotated with a subscript, for example
\[
\bang~M \red \bang_q N
\]
where $q$ must explain how $N$ was obtained from the initial expression $M$.

As a slightly more involved example, we consider the following expression:
\[
\bang~\iftrm~(1 \equest 2)~M~N
\]
Here, $\equest$ represents an equality test between integers, while $M$ and $N$ represent respectively the `then' and `else' branches of the $\iftrm$ statement. To evaluate the expression, we first reduce the equality test to $\# F$ (boolean false), and the $\iftrm$ to its `else' branch. In \JLr, this computation will be described as:
\[
\bang~\iftrm~(1 \equest 2)~M~N \red \bang_{\cQ[q]} \iftrm~\# F~M~N \red
\bang_{\qtrans(\cQ[q],q')} N
\]
where $q$, $\cQ$, and $q'$ are defined in such a way that:
\begin{itemize}
\item $q$ explains how $1 \equest 2$ reduces to $\# F$
\item $\cQ$ is a context indicating that $q$ was applied to the guard of an $\iftrm$ statement -- in other words, it describes a congruence rule of reduction
\item $q'$ describes the reduction of $\iftrm~\# F \cdots$ to its `else' branch
\item $\qtrans$ combines its two arguments by means of transitivity.
\end{itemize}

To make use of history-aware expressions, we will also provide constructs to access their contents or to inspect their history. 

\subsection{Syntax}
We now give the formal definition of $\JLr$. The language includes
three syntactic classes, which we have already presented informally in
the overview:
\begin{itemize}
\item \emph{codes} $s,t$ correspond to (extended) lambda expressions which have not been evaluated yet: in other words, they only contain bangs with no history
\item \emph{trails} $q,q'$ encode the history of the computation transforming a certain code into another
\item \emph{terms} $M,N$ are lambda expressions that have been
  partially evaluated: they are similar to codes, but their bangs are
  annotated with trails.
\end{itemize}

The \emph{types} of the language (denoted by metavariables
$A, B, C, \ldots$) include atomic types $P$, functions $A \imp B$, and
\emph{audited units} $\AU{s} A$. Notice that types contain codes,
which can contain types in turn.  Audited types correspond to terms
that capture and record all the computation happening inside them, for
subsequent retrieval.


The syntax of the language is presented in
Figure~\ref{fig:syntax}. Codes and terms are defined in a similar way
to Pfenning and Davies~\cite{Pfenning2001}. We use the symbol
`$\bang$' (``bang'') to denote box introduction in both codes and
terms; the corresponding elimination form is given by the $\tlet$
operator, which unpacks an audited code or term allowing us to access
its contents. The variable declared by a $\tlet$ is bound in its
second argument: in essence, $\tlet(u := \bang_q M,N)$ will reduce to
$N$, where $u$ has been replaced by $M$; $q$ will be used to produce a
trail explaining this reduction.

\begin{example}\label{ex:letbang}
$\JLr$ allows us to manipulate history-carrying data explicitly. For instance, we could have the following history-carrying natural numbers:
\[
\bang_q 2 : \AU{1+1} \bbN 
\qquad \qquad
\bang_{q'} 6 : \AU{\mathsf{fact}~3} \bbN
\]
The trail $q$ represents the history of $2$, i.e. a witness to the computation that produced $2$ by adding $1+1$; similarly, $q'$ describes how computing $\mathsf{fact}~3$ (for a suitable definition of the factorial function) yielded $6$.

Now supposing we wish to add these two numbers together, at the same time retaining their history, we will use the $\tlet$ construct to look inside them:
\[
\bang~\tlet(u^\bbN := \bang_q 2, \tlet(v^\bbN := \bang_{q'} 6, u + v))
\stackrel{*}{\red} \bang_{q''} 8 : \AU{\tlet(u^\bbN := \bang~1+1, \tlet(v^\bbN := \bang~\mathsf{fact}~3, u+v))}~\bbN
\]
where the type of the terms is preserved under reduction, and the final trail $q''$, produced by composing $q$ and $q'$, is expected to log the reductions of $1+1$, of $\mathsf{fact}~3$, and of the two $\tlet$s.
\end{example}

Trail inspection codes and terms will perform computation by primitive
recursion on a certain trail, which represents an audited unit's
computation history. Trail inspection references the computation
history of the current context ($\tinsp(\theta)$ and
$\tinsp(\vartheta)$).  
Here, $\theta$ and
$\vartheta$ contain the branches that define the recursion.
\begin{figure*}[t]
{\small{
\begin{center}
\begin{tabular}{rclrcll}
\multicolumn{3}{l}{Codes:} & \multicolumn{3}{l}{Terms:} & \\
$s, t, \ldots$ & $::=$ & $a$ & $M, N, \ldots$ & $::=$ & $a$ & simple variable \\
& $|$ & $u$ & & $|$ & $u$ & audited variable \\
& $|$ & $\lambda a^A.s$ & & $|$ & $\lambda a^A.M$ & abstraction \\
& $|$ & $(s \; t)$ & & $|$ & $(M \; N)$ & application \\
& $|$ & $\bang s$ & & $|$ & $\bang_q M$ & box introduction \\
& $|$ & $\tlet(u^A := s, t)$ & & $|$ & $\tlet(u^A := M, N)$ & box elimination \\
& $|$ & $\tinsp(\theta)$ & & $|$ & $\tinsp(\vartheta)$ & trail inspection 
\end{tabular}

\medskip

\begin{tabular}{rcllrcll}
\multicolumn{4}{l}{\emph{Trails:}} &
\multicolumn{4}{l}{\emph{Trail inspection branches:}} \\
$q, q', \ldots$ & $::=$ & $\qrefl(s)$ & reflexivity &
$\theta, \theta', \ldots$ & $::=$ & $\{\vect{s/\psi}\}$ & for codes \\
& $|$ & $\qtrans(q,q')$ & transitivity &
$\vartheta, \vartheta', \ldots$ & $::=$ & $\{\vect{M/\psi}\}$ & for terms \\
& $|$ & $\qba(a^A.s,t)$ & beta reduction &
$\zeta, \zeta', \ldots$ & $::=$ & $\{\vect{q/\psi}\}$ & for trails \\
& $|$ & $\qbb(s, u^A.t)$ & box beta reduction &
\multicolumn{4}{l}{\emph{Trail inspection branch labels:}} \\
& $|$ & $\qoti(q,\theta)$ & trail inspection reduction &
$\psi, \psi', \ldots$ & $::=$ & \multicolumn{2}{l}{$\qrefl \orelse \qtrans \orelse \qba \orelse \qbb \orelse \qoti $} \\
& $|$ & $\qlam(a^A.q)$ & congruence wrt lambda &
 & $\orelse$ & \multicolumn{2}{l}{$ \qlam \orelse\qapp \orelse \qlet $} \\
& $|$ & $\qapp(q,q')$ & congruence wrt application &
 & $\orelse$ & \multicolumn{2}{l}{$\qotr_{[]} \orelse \qotr_{::}\orelse \qd$} \\
& $|$ & $\qlet(q,u^A.q')$ & congruence wrt let &
& & & \\
& $|$ & $\qotr(\zeta)$ & congruence wrt trail insp. &
\end{tabular}
\end{center}
}}
\caption{Syntax of \JLr}\label{fig:syntax}
\end{figure*}

 The only difference between codes and terms is that in terms, bang
 operators are annotated with a trail $q$: this follows the intuition
 that terms are codes considered with their computational
 evolution. Trails represent (multi-step) reductions relating two
 codes: thus, their structure matches the derivation tree of the
 reduction itself. 

 In this definition, when a trail contains an argument composed of a
 variable and another expression separated by a dot, the variable
 should be considered bound in that expression. The constructors
 $\qba, \qbb$ and $\qoti$ define atomic reductions: $\qba$ represents
 the regular $\beta$-reduction; $\qbb$ is similar, but reduces a
 let-bang combination; finally, $\qoti$ reduces trail inspections. The
 constructors $\qlam$, $\qapp$, $\qlet$, and $\qotr$ provide
 congruence rules for all code constructors (with the notable
 exception of bangs, since trails are not allowed to escape an audited
 unit): for example, if $q$ is a trail from $s$ to $t$, $\qlam(a^A.q)$
 constructs a trail from $\lambda a^A.s$ to $\lambda a^A.t$. Finally,
 the definition of trails is completed by reflexivity and transitivity
 (but excluding symmetry). Notice that reflexive trails take as a parameter
 the code whose identity is stated (for brevity, we will keep this
 parameter implicit in our examples).


\begin{example}\label{ex:applambda}
We can build a pair of natural numbers using Church's encoding: \JLr makes it possible to keep a log of this operation without adding specialized code:
\begin{align*}
& \bang~((\lambda x,y,p.p~x~y)~2)~6 \\
\red & \bang_{\qtrans(\qrefl,\qapp(\qba(x.\lambda y,p.p~x~y,2),\qrefl))}~(\lambda y,p.p~2~y)~6 \\
\red & \bang_{\qtrans(\qtrans(\qrefl,\qapp(\qba(x.\lambda y,p.p~x~y,2),\qrefl)),\qba(y.\lambda p.p~2~y,6))}~\lambda p.p~2~6
\end{align*}
The trail for the first computation step is obtained by transitivity (trail costructor $\qtrans$) from the original trivial trail ($\qrefl$, i.e. reflexivity) composed with $\qba(x.\lambda y,p.p~x~y,2)$, which describes the beta-reduction of $(\lambda x,y,p.p~x~y)~2$: this subtrail is wrapped in a congruence $\qapp$ because the reduction takes place deep inside the left-hand subterm of an application (the other argument of $\qapp$ is reflexivity, because no reduction takes place in the right-hand subterm).

The second beta-reduction happens at the top level and is thus not wrapped in a congruence. It is combined with the previous trail by means of transitivity.
\end{example}

Inspection branches $\theta$, $\vartheta$ and $\zeta$ are maps from trail constructors to codes, terms, or trails. More precisely, since some of the trail constructors have a variable arity, in that case we need to distinguish a nil case ($[]$) and a cons case ($::$) to specify the corresponding branch of recursion; in addition, we do not require the map to be exhaustive, but allow a ``default'', catch-all branch specified using the label $\qd$: this does not make the language more expressive, but will allow us to extend the language with additional constants without needing to modify the base syntax.
We now give a formal definition of the result of trail inspection.
\begin{definition}
The operation $q\theta$, which produces a code by structural recursion on $q$ applying the inspection branches $\theta$, is defined as follows:
{\small{
\begin{align*}
\psi(\_)\theta &\triangleq \theta(\qd) \qquad \mbox{(if $\psi \notin \dom(\theta)$)} 
& 
\qlam(a^A.q)\theta &\triangleq \theta(\qlam)~(q\theta)
\\
\qrefl(\_)\theta &\triangleq \theta(\qrefl) 
&
\qapp(q,q')\theta &\triangleq \theta(\qapp)~(q\theta)~(q'\theta)
\\
\qtrans(q,q')\theta & \triangleq
  \theta(\qtrans)~(q\theta)~(q'\theta)
&
\qlet(q,u^A.q')\theta &\triangleq \theta(\qlet)~(q\theta)~(q'\theta) 
\\
\qba(\_,\_)\theta &\triangleq \theta(\qba) 
&
\qotr(\{ \})\theta &\triangleq \theta(\qotr_{[]})
\\
\qbb(\_,\_)\theta &\triangleq \theta(\qbb)
&
\qotr(\{q/\psi,\vect{q'/\psi'}\})\theta &\triangleq 
  \theta(\qotr_{::})~(q\theta)~(\qotr(\{\vect{q'/\psi'}\})\theta)
\\
\qoti(\_,\_)\theta &\triangleq \theta(\qoti) 
&
\end{align*}
}}
The twin operation $q\vartheta$, which produces a term by structural recursion on $q$, is defined similarly, by replacing each occurrence of $\theta$ with $\vartheta$.
\end{definition}

\begin{example}\label{ex:profiling}
For profiling purposes, we might be interested in counting the number of computation steps that have happened thus far in an audited unit. This can be accomplished by means of a trail replacement $\vartheta_+$ such that:
\[
\vartheta_+(\psi) = 
  \begin{cases}
  1 & \text{if $\psi = \qba, \qbb, \qoti$} \\
  \lambda x.x & \text{if $\psi = \qlam$} \\
  \lambda x,y. x+y & \text{if $\psi = \qtrans, \qapp, \qlet, \qotr_{::}$} \\
  0 & \text{else}
  \end{cases}
\]
This trail replacement counts $1$ for each reduction step ($\qba, \qbb, \qoti$); the cases for transitivity and congruences add together the recursive results for subtrails; the last clause ignores other constructors, most notably $\qrefl$ and $\qoti_{[]}$, which are not reduction steps and do not have subtrails.
\end{example}

\subsection{Substitutions}
\JLr{} employs two notions of substitution, one for each kind of variable supported by the language:
\begin{itemize}
\item Simple variable substitution, which is not history-aware, is
  applied to codes, types, or terms and maps a simple variable to,
  respectively, a code or a term; we denote it by $\subst{a}{t}$ or
  $\subst{a}{N}$, which can also be written $\gamma$ for short.
\item Audited variable substitution, which is history-aware, is
  applied to codes or terms and maps an audited variable to,
  respectively, a code $t$ or a tuple $(N,q,t)$; we denote it by
  $\subst{u}{t}$ or $\subst{u}{(N,q,t)}$, which can also be written
  $\delta$ for short.
\end{itemize}
We will also use the letter $\eta$ to refer to either of the two flavors of substitution when more precision is not needed.

\begin{definition}[substitution on codes]\label{defn:csubst}
Substitution
of simple and audited variables on a code $r$ ($r[s/a]$ and
$r[s/u]$) is defined as usual (avoiding variable capture). The full
definition is shown in Appendix~\ref{app:codesubst}.
\end{definition}

Each of the two substitutions on codes is extended to trails and types
in the
obvious way, by traversing the trail or type until a subexpression containing
a code $s$ is found, then resorting to substitution on $s$. 
To define substitution on terms, we first need auxiliary
definitions of the \emph{source} and \emph{target} of a trail:
\begin{definition}\label{defn:srctgt}
The source and target of a trail $q$ ($\src(q)$ and $\tgt(q)$) are defined as follows:
{\small{
\begin{align*}
\src(\qrefl(s)) & \triangleq s 
&
\tgt(\qrefl(s)) & \triangleq s
\\
\src(\qtrans(q_1,q_2)) & \triangleq \src(q_1)
&
\tgt(\qtrans(q_1,q_2)) & \triangleq \tgt(q_2)
\\ 
\src(\qba(a^A.s_1,s_2)) & \triangleq (\lambda a^A.s_1)~s_2
&
\tgt(\qba(a^A.s_1,s_2)) & \triangleq s_1\subst{a}{s_2}
\\
\src(\qbb(s_1,v^A.s_2)) & \triangleq \tlet(v^A := \bang s_1, s_2) 
&
\tgt(\qbb(s_1,v^A.s_2)) & \triangleq s_2\subst{v}{s_1}
\\
\src(\qoti(q',\theta)) & \triangleq \tinsp(\theta)
&
\tgt(\qoti(q',\theta)) & \triangleq q'\theta
\end{align*}
}}
The omitted cases are routine and shown in Appendix~\ref{app:srctgt}.
\end{definition}
A valid trail $q$ represents evidence that $\src(q)$ reduces to $\tgt(q)$ (we formally state this result in the next section).

We omit the definition of simple variable substitution on terms for brevity (it is similar to the corresponding substitution on codes).
The definition of audited variable substitution on terms, however, deserves some more explanation: since this substitution arises from the unpacking of a bang term, which contains a trail we need to take care of, it takes the form $ \subst{u}{(N,q,t)}$, where $q$ is the computation history that led from the code $t$ to the term $N$.

For related reasons, this substitution can be applied to a term to produce both a new term and a corresponding trail: the two operations are written $M\times\delta$ and $M\ltimes\delta$, respectively. 

\begin{definition}[audited variable substitution (terms)]\label{defn:tmvsubst}
The operations $M \times \subst{u}{(N,q,t)}$ and $M \ltimes \subst{u}{(N,q,t)}$ are defined as follows:
\[\small
\begin{array}{rclcrcl}
u \ltimes\subst{u}{(N,q,t)} & \triangleq& N&&
 (\bang_{q'} R) \ltimes \subst{u}{(N,q,t)} & \triangleq&
                                                        \bang_{\qtrans(q'\subst{u}{t},R \times\subst{u}{(N,q,t)})} (R \ltimes \subst{u}{(N,q,t)})\\ 
u \times\subst{u}{(N,q,t)} &\triangleq& q &&
(\bang_{q'} R) \times \subst{u}{(N,q,t)} &\triangleq& \qrefl(\bang (\src(q')\subst{u}{t}))
\end{array}\]
The omitted cases are routine and shown in
Appendix~\ref{app:avarsubst}. The main feature of interest here is that when we $\ltimes$-substitute
$(N,q,t)$ for $u$ in an audited unit $\bang_{q'}$, we need to augment
the trail with a step showing how the replaced occurrence of $u$
evaluated from $t$ to $N$, as well as substituting for other
occurrences of $u$; the corresponding case of $\times$-substitution returns the appropriate reflexivity trail, since no trail escapes a bang as a result of substitution.
\end{definition}

\begin{example}\label{ex:letbang2}
We consider again the term from Example~\ref{ex:letbang}:
\[
\bang~\tlet(u^\bbN := \bang_q 2, \tlet(v^\bbN := \bang_{q'} 6, u + v))
\]
Audited variable substitution is used to reduce $\tlet$-$\bang$ combinations; for example, to reduce the outer $\tlet$ in the example, we need to compute
\begin{align*}
\tlet(v^\bbN := \bang_{q'} 6, u + v)\lsubst{u}{2,q,\src(q)} & = \tlet(v^\bbN :=   \bang_{q'} 6, 2 + v) \\
\tlet(v^\bbN := \bang_{q'} 6, u + v)\csubst{u}{2,q,\src(q)} & = \qlet(\qrefl, \qapp(\qapp(\qrefl,q),\qrefl))
\end{align*}
(we assume that the infix notation for $+$ is syntactic sugar for two nested applications).
\end{example}

\subsection{Type system}
We now introduce typing rules for codes, trails, and terms. These rules use three corresponding judgments with the following shape:
\begin{center}
\begin{tabular}{ll}
  $\qchk{\CTX}{s}{A}$ & $s$ has type $A$ \\
  $\qchk{\CTX}{q}{s \direq_A t}$ & $q$ witnesses  $s$ reducing to $t$, with type $A$ \\
  $\tchk{\CTX}{M}{A}{s}$ &$M$ has type $A$ and code $s$
\end{tabular}
\end{center}
The environments $\Delta$ and $\Gamma$ are list of type declarations
for audited and simple variables respectively, in the form $u :: A$ or
$a : A$.

\begin{figure*}[p]
{\small{
\begin{center}
\begin{tabular}{c}
\multicolumn{1}{l}{\fbox{$\qchk{\CTX}{s}{A}$ }}
\\\rVS
 \AxiomC{$a:A \in \Gamma$} \RightLabel{\textsc{Var}}
 \UnaryInfC{$\qchk{\CTX}{a}{A}$}
 \DisplayProof
 \qquad
 \AxiomC{$\qchk{\Delta;\Gamma,a:A}{s}{B}$} \RightLabel{$\imp \Intro$}
 \UnaryInfC{$\qchk{\CTX}{\lambda a^A.s}{A \imp B}$}
 \DisplayProof
\qquad 
 \AxiomC{$\qchk{\CTX}{s}{A \imp B}$} 
 \AxiomC{$\qchk{\CTX}{t}{A}$}
  \RightLabel{$\imp \Elim$}
 \BinaryInfC{$\qchk{\CTX}{s\;t}{B}$}
 \DisplayProof
 \\ \rVS
 \AxiomC{$u::A \in \Delta$} \RightLabel{\textsc{mVar}}
 \UnaryInfC{$\qchk{\Delta;\Gamma}{u}{A}$}
 \DisplayProof
  \qquad
 \AxiomC{$\qchk{\Delta;\cdot}{t}{A}$} \RightLabel{$\Box \Intro$}
 \UnaryInfC{$\qchk{\Delta;\Gamma}{!t}{\AU{t} A}$}
 \DisplayProof
 \qquad
 \AxiomC{$\qchk{\CTX}{s}{\AU{r} A}$} 
 \AxiomC{$\qchk{\Delta,u::A;\Gamma}{t}{C}$} 
   \RightLabel{$\Box \Elim$}
 \BinaryInfC{$\qchk{\CTX}{\tlet(u^A := s,t)}{C\subst{u}{r}}$}
 \DisplayProof
\\ \rVS
 \AxiomC{$\qd \in \dom(\theta)$}
 \AxiomC{$\left[\qchk{\Delta;\cdot}{\theta(\psi)}{\cT^B(\psi)}\right]_{\psi \in \dom(\theta)}$} 
   \RightLabel{\textsc{TI}}
 \BinaryInfC{$\qchk{\CTX}{\tinsp(\theta)}{B}$} 
 \DisplayProof
\\ 
\multicolumn{1}{l}{\fbox{ $\tchk{\CTX}{M}{A}{s}$}}
\\\rVS
 \AxiomC{$a:A \in \Gamma$} \RightLabel{\textsc{T-Var}}
 \UnaryInfC{$\tchk{\CTX}{a}{A}{a}$}
 \DisplayProof
 \qquad
 \AxiomC{$\tchk{\Delta;\Gamma,a:A}{M}{B}{s}$} \RightLabel{\textsc{T-Abs}}
 \UnaryInfC{$\tchk{\CTX}{\lambda a^A.M}{A \imp B}{\lambda a^A.s}$}
 \DisplayProof
\\\rVS
 \AxiomC{$\tchk{\Delta;\Gamma}{M}{A \imp B}{s}$} 
 \AxiomC{$\tchk{\Delta;\Gamma}{N}{A}{t}$}
  \RightLabel{\textsc{T-App}}
 \BinaryInfC{$\tchk{\Delta;\Gamma}{M\;N}{B}{s\;t}$}
 \DisplayProof
 \\ \rVS
 \AxiomC{$u::A \in \Delta$} \RightLabel{\textsc{T-mVar}}
 \UnaryInfC{$\tchk{\CTX}{u}{A}{u}$}
 \DisplayProof
 \qquad
 \AxiomC{$\tchk{\Delta;\cdot}{M}{A}{t}$} 
 \AxiomC{$\qchk{\Delta;\cdot}{q}{s \direq_A t}$}  
 \RightLabel{\textsc{T-Bang}}
 \BinaryInfC{$\tchk{\CTX}{\bang_q M}{\AU{s} A}{\bang s}$}
 \DisplayProof
 \\\rVS
 \AxiomC{$\tchk{\CTX}{M}{\AU{r} A}{s}$} 
 \AxiomC{$\tchk{\Delta,u::A;\Gamma}{N}{C}{t}$} 
   \RightLabel{\textsc{T-Let}}
 \BinaryInfC{$\tchk{\CTX}{\tlet(u^A := M,N)}{C\subst{u}{r}}{\tlet(u^A := s,t)}$}
 \DisplayProof
\\
 \rVS
 \AxiomC{$\qd \in \dom(\vartheta)$} 
 \AxiomC{$\dom(\vartheta) = \dom(\theta)$} 
 \AxiomC{$\left[\tchk{\Delta;\cdot}{\vartheta(\psi)}{\cT^B(\psi)}{\theta(\psi)}\right]_{\psi \in \dom(\vartheta)}$} 
   \RightLabel{\textsc{T-TI}}
 \TrinaryInfC{$\tchk{\CTX}{\tinsp(\vartheta)}{B}{\tinsp(\theta)}$} 
 \DisplayProof
\\ 
\multicolumn{1}{l}{\fbox{ $\qchk{\CTX}{q}{s \direq_A t}$}}
\\
 \rVS
 \AxiomC{$\qchk{\CTX}{s}{A}$} \RightLabel{\textsc{Q-Refl}}
 \UnaryInfC{$\qchk{\CTX}{\qrefl(s)}{s \direq_A s}$}
 \DisplayProof
\hspace{.5cm}
 \AxiomC{$\qchk{\CTX}{q_1}{r \direq_A s}$} 
 \AxiomC{$\qchk{\CTX}{q_2}{s \direq_A t}$} 
   \RightLabel{\textsc{Q-Trans}}
 \BinaryInfC{$\qchk{\CTX}{\qtrans(q_1,q_2)}{r \direq_A t}$}
 \DisplayProof
\\
 \rVS
 \AxiomC{$\qchk{\Delta;\Gamma,a:A}{s}{ B}$} 
 \AxiomC{$\qchk{\CTX}{t}{A}$}
   \RightLabel{\textsc{Q-$\qba$}}
 \BinaryInfC{$\qchk{\CTX}{\qba(a^A.s,t)}{(\lambda a^A.s) \; t \direq_B s\subst{a}{t}}$}
 \DisplayProof
\\\rVS
 \AxiomC{$\qchk{\Delta;\cdot}{s}{A}$}
 \AxiomC{$\qchk{\Delta,u::A;\Gamma}{t}{C}$}
   \RightLabel{\textsc{Q-$\qbb$}}
 \BinaryInfC{$\qchk{\Delta;\Gamma}
              {\qbb(s, u^A.t)}
              {\tlet(u^A := \bang s,t) \direq_{C\subst{u}{s}} t\subst{u}{s}}$}
 \DisplayProof
\\
 \rVS
 \AxiomC{$\qd \in \dom(\vartheta)$} 
 \AxiomC{$\qchk{\Delta;\cdot}{q}{s \direq_A t}$}
 \AxiomC{$\left[\qchk{\Delta;\cdot}{\theta(\psi)}{\cT^B(\psi)}\right]_{\psi \in \dom(\theta)}$} 
  \RightLabel{\textsc{Q-TI}}
 \TrinaryInfC{$\qchk{\CTX}{\qoti(q,\theta)}
  {\tinsp(\theta) \direq_B q\theta}$}
 \DisplayProof
\\
 \rVS
 \AxiomC{$\qchk{\Delta;\Gamma,a:A}{q}{s \direq_B t}$}
   \RightLabel{\textsc{Q-Abs}}
 \UnaryInfC{$\qchk{\CTX}{\qlam(a^A.q)}{\lambda a^A.s \direq_{A \imp B} \lambda a^A.t}$} 
 \DisplayProof
\\ \rVS
 \AxiomC{$\qchk{\Delta;\Gamma}{q_1}{s_1 \direq_{A \imp B} t_1}$} 
 \AxiomC{$\qchk{\Delta;\Gamma}{q_2}{s_2 \direq_A t_2}$}
   \RightLabel{\textsc{Q-App}}
 \BinaryInfC{$\qchk{\Delta;\Gamma}{\qapp(q_1,q_2)}{s_1 \; s_2 \direq_B t_1 \; t_2}$} 
 \DisplayProof
\\
 \rVS
 \AxiomC{$\qchk{\CTX}{q_1}{s_1 \direq_{\AU{r} A} t_1}$}
 \AxiomC{$\qchk{\Delta, u::A;\Gamma}{q_2}{s_2 \direq_{C} t_2}$}
   \RightLabel{\textsc{Q-Let}}
 \BinaryInfC{$\qchk{\CTX}{\qlet(q_1,u^A.q_2)}{\tlet(u^A := s_1,s_2) 
    \direq_{C\subst{u}{r}}
    \tlet(u^A := t_1,t_2)}$} 
 \DisplayProof
\\
 \rVS
 \AxiomC{$\qd \in \dom(\zeta)$} 
 \AxiomC{$\dom(\zeta) = \dom(\theta) = \dom(\theta')$} 
 \AxiomC{$\left[\qchk{\Delta;\cdot}{\zeta(\psi)}{\theta(\psi) \direq_{\cT^B(\psi)} \theta'(\psi)}\right]_{\psi \in \dom(\zeta)}$}
   \RightLabel{\textsc{Q-Trpl}}
 \TrinaryInfC{$\qchk{\CTX}{\qotr(\zeta)}{\tinsp(\theta) \direq_B \tinsp(\theta')}$} 
 \DisplayProof
\end{tabular}
\end{center}
}}
\caption{Typing rules for \JLr}\label{fig:rules}
\end{figure*}

Figure~\ref{fig:rules} shows the typing rules of the language. In the rules for codes, if we forget for a moment about trail inspections, it is easy to recognize Pfenning and Davies's judgmental presentation of modal logic. 
The rules for terms are very similar, the biggest difference being found in rule \textsc{T-Bang}: to typecheck $\bang_q M$, we first obtain the type $A$ and code $t$ for $M$; then we typecheck the trail $q$ to ensure that it has type $A$ and target $t$; if both checks are successful, we can return $\AU{s} A$ as the type of the expression, and $\bang s$ as its code, where $s$ is the source of $q$. This also shows that when we say that $\bang s$ is the code of $\bang_q M$, we mean that the term is the result of reducing the code, and the trail $q$ records the computation history.

The rules for trail inspection, both as a code and as a term, rely on an auxiliary definition of $\cT^B$, which is a function from inspection branch labels to the corresponding type; it depends on the superscript $B$, which refers to the output type of the inspection. 
The rule for inspection codes (\textsc{TI}) checks that the
types of all the branches in $\theta$ match their labels. 
This requires $\theta$ to be defined on the default branch $\qd$ to
guarantee that the inspection is exhaustively defined.
We define $\cT^B$ as follows:
\[
\cT^B(\psi) \triangleq \begin{cases}
  B & (\psi = \qd, \qrefl, \qba, \qbb, \qoti, \qiti, \qotr_{[]}) \\
  B \imp B & (\psi = \qlam, \qitr_{[]}) \\
  B \imp B \imp B & (\psi = \qtrans, \qapp, \qlet, \qitr_{::}, 
    \qotr_{::}) 
  \end{cases}
\]

Rather than typecheck a term to compute its code, we can define a function performing this operation directly.
\begin{definition}\label{defn:cdoftm}
The code of a term $M$ (notation: $\code(M)$) is the code obtained by
replacing all occurrences of $\bang_q N$ with $\bang\src(q)$.  The
full definition is shown in Appendix~\ref{app:codes}.
\end{definition}

The rule \textsc{T-Bang}, which typechecks audited terms, needs to
compute the type of a trail: this is performed by the typechecking
rules for trails. Each of these rules typechecks a different trail
constructor: in particular, four rules define the contraction of
redexes (\textsc{Q-$\qba$}, \textsc{Q-$\qbb$}, \textsc{Q-TI}),
\textsc{Q-Refl} and \textsc{Q-Trans} ensure that trails induce a
preorder on codes, and the remaining rules model congruence with
respect to all code constructors but bangs. Let us remark that the
trail $q$ mentioned in trail inspections cannot be synthesized from
the redex (since codes are not history-aware), but must be provided as
an argument to $\qoti$.

We can prove that operations of Definitions~\ref{defn:srctgt} and~\ref{defn:cdoftm} match the typing judgments:
\begin{lemma}
If $\qchk{\CTX}{q}{s \direq_A t}$ then $\src(q) = s$ and $\tgt(q) = t$.
\end{lemma}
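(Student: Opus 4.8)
The plan is to prove both equations simultaneously by structural induction on the trail $q$, following the case analysis already laid out in the definitions of $\src$ and $\tgt$ (Definition~\ref{defn:srctgt}) and in the trail typing rules of Figure~\ref{fig:rules}. The crucial observation is that the typing rules for trails are syntax-directed: each trail constructor is typed by exactly one rule, and that rule fixes the source and target codes appearing in the judgment $s \direq_A t$ to be precisely the expressions computed by $\src$ and $\tgt$. So for each constructor I invert the typing derivation to learn the shape of $s$ and $t$, then compare against the defining equation for $\src$ and $\tgt$.

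\emph{First} I would dispatch the base cases. For $\qrefl(s)$, rule \textsc{Q-Refl} forces the judgment to read $s \direq_A s$, matching $\src(\qrefl(s)) = \tgt(\qrefl(s)) = s$ with no appeal to the induction hypothesis. The atomic reduction constructors $\qba$, $\qbb$, $\qoti$ are likewise non-recursive in the relevant sense: rule \textsc{Q-$\qba$} types $\qba(a^A.s,t)$ at $(\lambda a^A.s)\,t \direq_B s\subst{a}{t}$, which is exactly the pair $(\src,\tgt)$ prescribed; \textsc{Q-$\qbb$} and \textsc{Q-TI} are identical in spirit, with the target of $\qoti(q,\theta)$ being $q\theta$ just as the definition states. \emph{Then} the inductive cases: for the congruence and composition constructors $\qtrans$, $\qlam$, $\qapp$, $\qlet$, $\qotr$, I would invert the corresponding typing rule to extract typing judgments for the immediate subtrails, apply the induction hypothesis to each, and observe that the recursive clauses of $\src$ and $\tgt$ assemble the results in exactly the way the rule's conclusion does. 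For instance, in the $\qtrans(q_1,q_2)$ case, \textsc{Q-Trans} gives $q_1 : r \direq_A s$ and $q_2 : s \direq_A t$; the induction hypothesis yields $\src(q_1)=r$ and $\tgt(q_2)=t$, which are precisely $\src(\qtrans(q_1,q_2))$ and $\tgt(\qtrans(q_1,q_2))$ by definition.

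The only case demanding genuine care is the trail-inspection congruence $\qotr(\zeta)$, typed by \textsc{Q-Trpl}: here the source and target are $\tinsp(\theta)$ and $\tinsp(\theta')$, where $\theta$ and $\theta'$ are the componentwise sources and targets of the branch trails in $\zeta$. Strictly this requires knowing that $\src$ and $\tgt$ thread through the finite map of branches correctly, but since \textsc{Q-Trpl} imposes $\dom(\zeta)=\dom(\theta)=\dom(\theta')$ and types each $\zeta(\psi)$ at $\theta(\psi)\direq_{\cT^B(\psi)}\theta'(\psi)$, the induction hypothesis applied branchwise gives exactly the required identities. I do not anticipate a deep obstacle anywhere; the proof is essentially a verification that the definitions of $\src$ and $\tgt$ were designed to mirror the trail typing rules. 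The main thing to get right is bookkeeping: ensuring the mutual/simultaneous induction is set up so that both the $\src$ and $\tgt$ claims are available for every subtrail, and handling the bound-variable cases ($\qlam$, $\qbb$, $\qlet$) without capture, which in the Nominal Isabelle formalization is discharged by the nominal machinery.
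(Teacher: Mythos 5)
Your proof is correct and follows exactly the routine structural induction the paper leaves implicit (the lemma is stated without proof there): the trail typing rules are syntax-directed and were designed to mirror the defining clauses of $\src$ and $\tgt$, so inversion plus the induction hypothesis on subtrails (applied pointwise for $\qotr(\zeta)$) closes every case.
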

\begin{lemma}
If $\tchk{\CTX}{M}{A}{s}$ then $\code(M) = s$.
\end{lemma}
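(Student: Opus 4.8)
The plan is to proceed by induction on the derivation of $\tchk{\CTX}{M}{A}{s}$, equivalently by structural induction on $M$, since the term-typing rules are syntax-directed (each term former is typed by exactly one rule). We must check that the code $s$ synthesized by the typing judgment coincides with the syntactic projection $\code(M)$ from Definition~\ref{defn:cdoftm}.

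Most cases follow immediately from the induction hypothesis. For \textsc{T-Var} and \textsc{T-mVar} there is nothing to prove, as $\code$ fixes variables and the synthesized code is the variable itself. For \textsc{T-Abs}, \textsc{T-App}, and \textsc{T-Let}, the operation $\code$ distributes homomorphically over the relevant term former (so that, e.g., $\code(M\,N) = \code(M)\,\code(N)$), and the claim follows by applying the induction hypothesis to each immediate subterm and comparing with the code read off from the rule's conclusion. The trail-inspection case \textsc{T-TI} is analogous: using $\dom(\vartheta) = \dom(\theta)$, the induction hypothesis on each branch gives $\code(\vartheta(\psi)) = \theta(\psi)$, whence $\code(\tinsp(\vartheta)) = \tinsp(\theta)$.

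The only case of genuine interest -- and the expected obstacle, though a mild one -- is \textsc{T-Bang}. Here the term is $\bang_q M$ with synthesized code $\bang s$, where $q$ has trail type $s \direq_A t$; by the definition of $\code$ we have $\code(\bang_q M) = \bang\src(q)$. It therefore suffices to show $\src(q) = s$, which is exactly the preceding lemma applied to the premise $\qchk{\Delta;\cdot}{q}{s \direq_A t}$. The key point is that $\code$ does not recurse into the body of a bang but instead recovers the original, un-evaluated code from the \emph{source} of the recorded trail; accordingly, the induction hypothesis on $M$ is not used here, as it would yield $\code(M) = t = \tgt(q)$, describing the body's current code rather than the code $\bang\src(q)$ of the audited unit as a whole. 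This observation also clarifies the design of $\code$: it strips trail annotations while restoring, via $\src$, the code that each audited unit originated from.
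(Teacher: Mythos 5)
Your proof is correct: the paper states this lemma without proof (it is the routine induction you describe), and your argument supplies exactly the expected details. In particular, you correctly identify that the only non-homomorphic case is \textsc{T-Bang}, where one must appeal to the preceding lemma ($\src(q)=s$ from $\qchk{\Delta;\cdot}{q}{s \direq_A t}$) rather than to the induction hypothesis on the body $M$.
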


\subsection{Semantics}
In this section, we define a reduction relation expressing how terms compute to values. Reduction differs from trails since it relates terms (rather than codes), and since unlike trails it does not appear as part of terms (however, it is defined in such a way that reduced terms will contain modified trails expressing a \emph{log} of reduction).

Our definition of reduction makes use of contexts, i.e. terms with holes, represented by black boxes ($\blacksquare$); similarly, we provide a notion of trail contexts, denoted by $\cQ$.
The notation $\cE[M]$ indicates the term obtained by filling the hole in $\cE$ with $M$; $\cQ[q]$ is defined similarly.
\[\small
\begin{array}{rcl}
\cE & ::= & \blacksquare \orelse \lambda a^A.\cE \orelse (\cE \; M)
            \orelse (M \; \cE) \orelse \bang_q \cE \orelse \tlet(u^A := \cE, M) \orelse
            \tlet(u^A := M, \cE) \orelse\tinsp(\{ \vect{M/\psi}, \cE/\psi', \vect{N/\psi''} \}) \smallskip\\
\cQ & ::= & \blacksquare \orelse \qtrans(\cQ,q) \orelse \qtrans(q,\cQ) \orelse \qapp(\cQ,q) \orelse \qapp(q,\cQ) \orelse \qlam(a^A.\cQ) \orelse \qlet(\cQ,u^A.q) \orelse \qlet(q,u^A.\cQ) \\
& | & \qotr(\{ \vect{q_1/\psi}, \cQ/\psi', \vect{q_2/\psi''} \}) 
\end{array}
\]

Following~\cite{Bavera2015} we use $\cF$ to refer to contexts that do not allow holes inside bangs.
Box-free contexts $\cF$ and trail contexts $\cQ$ are related by the following definition:
\begin{definition}[canonical trail context]
For every $\cF$, there exists a canonical trail context $\cQ_\cF$, defined as follows:
\[\small{
\begin{array}{rclcrcl}
\cQ_\blacksquare & \triangleq& \blacksquare
&&
\cQ_{(\cF \; M)} & \triangleq& \qapp(\cQ_\cF,\qrefl(\code(M))) \\
\cQ_{(M \; \cF)} & \triangleq& \qapp(\qrefl(\code(M)),\cQ_\cF)
&&
\cQ_{\lambda a^A.\cF} & \triangleq &\qlam(a^A.\cQ_\cF) \\
\cQ_{\tlet(\cF,u^A.M)} & \triangleq &\qlet(\cQ_\cF,u.\qrefl(\code(M)))
&&
\cQ_{\tlet(M,u^A.\cF)} & \triangleq& \qlet(\qrefl(\code(M)),u^A.\cQ_\cF)
\smallskip\\
\cQ_{\tinsp(\{ \vect{M/\psi}, \cF/\psi', \vect{N/\psi''} \})} & \triangleq &\multicolumn{4}{l}{ \qotr(\{ \vect{\qrefl(\code(M))/\psi},Q_\cF/\psi'. \vect{\qrefl(\code(N))/\psi''} \})}
\end{array}
}\]
\end{definition}
Informally, $\cQ_\cF$ uses congruences to express a reflexive trail for $\cF$, with a hole as the subtrail corresponding to the hole in $\cF$. It is $\cQ_\cF$ that is responsible for producing the $\qapp$ congruence of Example~\ref{ex:applambda}.

Thanks to trail contexts and our definition of audited variable
substitution, we can define reduction directly, without an auxiliary
judgment pushing trails towards the closest outer bang. To avoid
dealing with the unwanted situation of trail inspections not guarded
by a bang, we only consider terms surrounded by an outer bang.

\begin{figure*}[tb]
{\small{
\begin{center}
\begin{tabular}{c}
\rVS
\AxiomC{\phantom{$A$}}\RightLabel{\textsc{B-$\beta$}}
\UnaryInfC{$\bang_q \cF[(\lambda a^A.M) \; N)] \red \bang_{\qtrans(q,\cQ_\cF[\qba(a.\code(M),\code(N))])} \cF[M\subst{a}{N}]$} 
\DisplayProof
\quad
\AxiomC{$M \red N$}\RightLabel{\textsc{B-Bang}}
\UnaryInfC{$\bang_q \cF[M] \red \bang_q \cF[N]$}
\DisplayProof
\\
\rVS
\AxiomC{$q_f = \qtrans(q,\cQ_\cF[\qtrans(\qbb(\src(q'),\code(N)),N\times\subst{u}{(M,q',\src(q'))})])$} \RightLabel{\textsc{B-$\beta_\Box$}}
\UnaryInfC{$\bang_q \cF[\tlet(\bang_{q'} M,u.N)] \red \bang_{q_f} \cF[N\ltimes\subst{u}{(M,q',\src(q'))}]$}
\DisplayProof
\\
\rVS
\AxiomC{\phantom{$A$}} \RightLabel{\textsc{B-TI}}
\UnaryInfC{$\bang_q \cF[\tinsp(\vartheta)] \red \bang_{\qtrans(q,\cQ_\cF[\qoti(q,\code(\vartheta))])} \cF[q\vartheta]$}
\DisplayProof
\end{tabular}
\end{center}
}}
\caption{Term reduction rules for \JLr}\label{fig:redrules}
\end{figure*}

The reduction rules are defined in Figure~\ref{fig:redrules}. They
operate by contracting a subterm appearing in a context $\cF$, at the
same time updating the trail of the enclosing bang by asserting that
$q$ is followed by a new contraction appearing in the trail context
$\cQ_\cF$. Rule \textsc{B-$\beta_\Box$} is an exception, in that after
performing the $\qbb$ contraction, we still need to take into account
the residuals of $q'$, expressed by the substitution
$N \csubst{u}{(M,q',\src(q'))}$

We write $s \leadsto t$ when there exists a well-typed $q$ such that $\src(q) = s$ and $\tgt(q)= t$.

\begin{example}
We take again the term from Example~\ref{ex:letbang} and reduce the outer $\tlet$ as follows:
\[
\bang~\tlet(u^\bbN := \bang_q 2, \tlet(v^\bbN := \bang_{q'} 6, u + v))
\red
\bang_{q_f}~\tlet(v^\bbN := \bang_{q'} 6, 2 + v)
\]
where we use the substitutions we computed in Example~\ref{ex:letbang2}. Since the reduction happens immediately inside the bang, we use rule \textsc{B-$\qbb$} with $\cQ_\blacksquare = \blacksquare$, thus $q_f$ is as follows:
\begin{align*}
q_f & = \qtrans(\qrefl, \qtrans(\qbb(\src(q),u.\code(\tlet(v := \bang_{q'} 6, u + v)),\qlet(\qrefl, \qapp(\qapp(\qrefl,q),\qrefl)))) \\
& = \qtrans(\qrefl,\qtrans(\qbb(1+1,u.\tlet(v := \bang~\mathsf{fact}~3, u + v)),\qlet(\qrefl, \qapp(\qapp(\qrefl,q),\qrefl))))
\end{align*}
(where $\src$ and $\code$ compute according to the hypotheses we made about $q$ and $q'$).
\end{example}

\begin{example}
To show how to use the profiling inspection from Example~\ref{ex:profiling}, we need to assume a certain evaluation strategy: for example, call-by-value.
We then write $(x \leftarrow M; N)$ as syntactic sugar for $(\lambda x.N)~M$ and evaluate the following term:
\[
\begin{array}{l}
!~(t_0 \leftarrow \tinsp(\vartheta_+); \\
   \quad \_ \leftarrow ((\lambda x,y,p.p~x~y)~2)~6; \\
   \quad t_1 \leftarrow \tinsp(\vartheta_+); \\
   \quad t_1 - t_0)
\end{array}
\]
In this term, $t_0$ evaluates to $0$ because the bang starts with a reflexive trail; by the time we get to the third line, the outer trail contains a log of the inspection on the first line, the two beta reductions needed to evaluate the second line, and two more beta reductions to account for sequential compositions: thus $t_1$ evaluates to $5$; finally, we evaluate $t_1 - t_0 = 5$.
\end{example}

\begin{definition}[normal form]
A term $M$ is in normal form iff for all trails $q$ and all terms $N$,
$\bang_q M \not\leadsto N$.  Likewise, a code is in normal form iff
there exists no $t \neq s$ such that $s \leadsto t$.
\end{definition}

It should be noted that \JLr, despite being strongly normalizing (as we will prove in the next section), is not confluent: the same term may reduce to different values under different reduction strategies. In particular, the
trails appearing in bangs are sensitive to the reduction order: a call-by-value strategy and a call-by-name strategy
will produce different trails. We could recover confluence by forcing
a certain evaluation strategy and quotienting trails by means of a
suitable equivalence relation: this is beyond the scope of the present paper.

The main properties of \JLr, such as subject reduction, can be proved
similarly to those for \JLh.  As explained in Section~\ref{sec:intro},
we also need to establish the consistency of trails as proofs of
reducibility. 
\begin{theorem}
For all codes $s,t$ of \JLr in normal form such that $s \neq t$, there exist no $\Delta, \Gamma, q, A$ such that $\qchk{\CTX}{q}{s \direq_A t}$
\end{theorem}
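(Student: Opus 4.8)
The plan is to reduce the statement to a single inductive claim and then lean on the source/target lemma. By that lemma, any derivation $\qchk{\CTX}{q}{s \direq_A t}$ forces $\src(q) = s$ and $\tgt(q) = t$; so a counterexample to the theorem is precisely a well-typed trail whose source is a normal form but whose target is a \emph{different} code. It therefore suffices to prove the main claim: \emph{if $\qchk{\CTX}{q}{s \direq_A t}$ and $s$ is in normal form, then $t = s$}. The theorem is then immediate, since for distinct normal forms $s \neq t$ a trail $\qchk{\CTX}{q}{s \direq_A t}$ would contradict the claim. (Note that only $s$ need be assumed normal; the hypothesis on $t$ is not used.)

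To make the argument informative rather than a restatement of the definition, I would first record a structural characterization of normal-form codes. Because the congruence constructors $\qlam, \qapp, \qlet, \qotr$ cover every code former except $\bang$, a code is normal exactly when it contains no redex in a congruence-reachable position --- that is, no subcode of shape $(\lambda a^A.s_1)\,s_2$, $\tlet(u^A := \bang s_1, s_2)$, or $\tinsp(\theta)$ --- while subcodes occurring under a $\bang$ are unconstrained, since no trail constructor descends into a bang. In particular normal form is closed under congruence subterms: if $s_1\,s_2$ (or $\lambda a^A.s$, or a $\tlet$) is normal, so are its immediate congruence subcodes, for otherwise a $\qapp$/$\qlam$/$\qlet$ congruence would lift a nontrivial subtrail to a nontrivial trail out of the whole code. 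With this in hand I would prove the main claim by induction on the derivation of $q$. The case $\qrefl$ is immediate. For $\qtrans(q_1,q_2)$, the source of $q_1$ is the normal code $s$, so the hypothesis forces the midpoint to equal $s$, which is again normal, and a second application to $q_2$ yields $t = s$. Each congruence case applies the hypothesis to the subtrails, whose sources are the normal congruence subcodes of $s$, and reassembles $t = s$. The cases $\qba, \qbb, \qoti$ are exactly those whose source is a redex, and there I must derive a contradiction with normality.

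The crux, and the step I expect to be the main obstacle, is showing that a redex code is never in normal form, i.e. that it admits a trail to a genuinely distinct code. For $\qbb$ this is clean: the contractum $s_2\subst{u}{s_1}$ is a proper subexpression-structure obtained from $\tlet(u^A := \bang s_1, s_2)$, and a routine shape argument separates them; the trail-inspection case $\qoti$ is the delicate one, since the reduct of $\tinsp(\theta)$ is $q'\theta$ and so depends on the chosen trail $q'$. Here I would exhibit one well-typed witness --- rule \textsc{TI} guarantees a default branch and well-typed branches, so a reflexivity trail $q' = \qrefl(\cdot)$ is always available, giving reduct $\theta(\qrefl)$ (or $\theta(\qd)$), which is a proper subcode of $\tinsp(\theta)$ and hence distinct. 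This is precisely where the strengthened reading advertised in the introduction --- that trail inspection may return the contractum for \emph{any} well-formed trail --- is convenient, as it frees the argument from committing to a particular $q'$. The only genuinely subtle sub-case is $\qba$, where establishing $s_1\subst{a}{s_2} \neq (\lambda a^A.s_1)\,s_2$ cannot rely on a naive size argument (substitution may duplicate $s_2$) and must not invoke strong normalization, on pain of circularity; I would instead argue syntactically from the top-level shape of the two codes together with the typing of the redex. Once redexes are shown non-normal, the $\qba, \qbb, \qoti$ cases are vacuous under the normality hypothesis, completing the induction and hence the theorem.
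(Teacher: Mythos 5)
Your proposal is correct and takes essentially the same route as the paper's proof: a structural induction on the derivation of $\qchk{\CTX}{q}{s \direq_A t}$ showing that normality rules out the contraction rules \textsc{Q-$\qba$}, \textsc{Q-$\qbb$} and \textsc{Q-TI}, so that $q$ is built only from reflexivity, transitivity and congruences, forcing $s = t$; your refinements (normality of the source alone suffices, propagated through the $\qtrans$ midpoint and the congruence subcodes) are sound. The one place where you do extra work --- proving that a redex differs from its contractum, including the delicate $s_1\subst{a}{s_2} \neq (\lambda a^A.s_1)\,s_2$ case you rightly flag --- is needed only under the literal reading of ``normal form'' as ``no well-typed trail to a distinct code''; the paper instead treats a normal code as one containing no congruence-reachable redex, under which the contraction cases are excluded immediately because their sources are syntactically redexes, with no comparison to the contractum required.
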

\begin{proof}
If such a judgment were provable, by structural induction on it we would be able to show that $q$ must be a combination of reflexivity, transitivity, and congruence rules (since both $s$ and $t$ are in normal form, no contraction is possible in absence of symmetry). Then $s = t$, which falsifies the hypothesis.
\end{proof}

\section{Strong Normalization}\label{sec:sn}
We now summarize our proof of strong normalization for \JLr{}. The presence of a recursion operator on trails, whose occurrences can be arbitrarily nested, together with the fact that trails grow monotonically during execution, makes this result non-trivial.

 Our proof employs the well-known notion of ``candidates of reducibility'' \cite{GirardCR} (sets of strongly normalizing terms enjoying certain desirable properties). Candidates of reducibility are a powerful and flexible tool, which has been used to prove the strong normalization property of very expressive lambda calculi~\cite{Geuvers94, Luo90, Werner94} and also other results such as the Church-Rosser property~\cite{Gallier90}. 

In the literature, it is possible to find several definitions of candidate of reducibility: along with Girard's definition, we can cite Tait's saturated sets~\cite{Tait75} and Parigot's inductive definition~\cite{Parigot94}.
Our proof employs Girard's candidates, and can be easily compared to the similar proof for the System F~\cite{proofstypes}. Some acquaintance with that proof will be assumed in the rest of this section.
We will use $\SN$ to denote the set of all strongly normalizing terms. Reduction on strongly normalizing terms is a well-founded relation, which allows us to reason by well-founded induction on it.

\subsection{A simplified calculus}
As a technical means to investigate normalization of \JLr, we define
\JLs, a simplified version of the calculus which forgets about trails
associated with box introductions. Its types, terms, and contexts are defined by the following grammar:
\[\small\begin{array}{rcl}
\tau, \sigma & ::= & P \mid \tau \imp \sigma \mid \Box \tau\\
s, t, \ldots & ::= &a 
\orelse u \orelse \lambda a^\tau.s 
\orelse (s \; t)
\orelse \bang s
\orelse \tlet(u^\tau := s, t)
\orelse \tinsp(\theta)\\
\theta, \theta', \ldots & ::=& \{\vect{s/\psi}\}\\
\cE & ::= & \blacksquare \orelse \lambda a^\tau.\cE \orelse (\cE \; s)
            \orelse (s \; \cE) \orelse \bang \cE \orelse \tlet(u^\tau
            := \cE, s) \orelse \tlet(u^\tau := s, \cE)\orelse  \tinsp(\{ \vect{s/\psi}, \cE/\psi', \vect{t/\psi''} \}) 
\end{array}\]
The terms of \JLs{} correspond closely to the codes of \JLr{}. Their semantics, however, is different: $\tinsp(\theta)$ does not perform inspection on the trail of the enclosing bang, but at the time of its evaluation will receive an arbitrary trail. We omit the definition of trails for brevity, but it can be obtained from the corresponding notion in \JLr, by replacing codes with \JLs{} terms.

We can also define simple and audited variable substitution on \JLs{} terms, in a way that mimics the corresponding notion of \JLr (Definition~\ref{defn:csubst}). 

We provide an erasure map from \JLr{} types and terms to \JLs (its extension to contexts is immediate).
\[\small\begin{array}{c}
|P| = P \qquad |A \imp B| = |A| \imp |B| \qquad |\AU{s}A| = \Box |A|
\\
|a| \triangleq a
\qquad
|u| \triangleq u
\qquad
|\lambda a^A.M| \triangleq \lambda a^{|A|}.|M| 
\qquad
|M \; N| \triangleq |M| \; |N|
\\
|\bang_q M| \triangleq \bang |M|
\qquad
|\tlet(u^A := M, N)| \triangleq \tlet(u^A := |M|,|N|)
\qquad
|\tinsp(\vartheta)| \triangleq \tinsp(|\vartheta|)
\end{array}
\]


\begin{figure*}[t]
{\small{
\begin{center}
\begin{tabular}{c}
 \rVS
 \AxiomC{$a:\tau \in \Gamma$}
 \UnaryInfC{$\qchk{\CTX}{a}{\tau}$}
 \DisplayProof
 \quad
 \AxiomC{$u::\tau \in \Delta$}
 \UnaryInfC{$\qchk{\CTX}{u}{\tau}$}
 \DisplayProof
 \quad
 \AxiomC{$\qchk{\Delta;\Gamma,a:\tau}{M}{\sigma}$}
 \UnaryInfC{$\qchk{\CTX}{\lambda a^\tau.M}{\tau \imp \sigma}$}
 \DisplayProof
\quad
 \AxiomC{$\qchk{\Delta;\cdot}{M}{\tau}$}
 \UnaryInfC{$\qchk{\CTX}{\bang M}{\Box \tau}$}
 \DisplayProof
 \\
 \rVS
 \AxiomC{$\qchk{\Delta;\Gamma}{M}{\tau \imp \sigma}$} \noLine
 \UnaryInfC{$\qchk{\Delta;\Gamma}{N}{\tau}$}
 \UnaryInfC{$\qchk{\Delta;\Gamma}{M\;N}{\sigma}$}
 \DisplayProof
 \quad
 \AxiomC{$\qchk{\CTX}{M}{\Box \tau}$} \noLine
 \UnaryInfC{$\qchk{\Delta,u::\tau;\Gamma}{N}{\sigma}$} 
 \UnaryInfC{$\qchk{\CTX}{\tlet(u^\tau := M,N)}{\sigma}$}
 \DisplayProof
 \quad
 \AxiomC{$\qd \in \dom(\theta)$} \noLine
 \UnaryInfC{$\left[\qchk{\Delta;\cdot}{\theta(\psi)}{\cT^\sigma(\psi)}\right]_{\psi \in \dom(\theta)}$} 
 \UnaryInfC{$\qchk{\CTX}{\tinsp(\theta)}{\sigma}$} 
  \DisplayProof
\end{tabular}
\end{center}
}}
\caption{Typing rules for \JLs{} terms}\label{fig:typerules_simple}
\begin{tabular}{c}
\rVS
\AxiomC{$((\lambda a.s) \; t) \red s\subst{a}{t}$}
\DisplayProof
\qquad
\AxiomC{$\tlet(\bang s,u.t) \red s\subst{u}{t}$}
\DisplayProof
\qquad
\AxiomC{$\tinsp(\theta) \red q\theta$}
\DisplayProof
\qquad
\AxiomC{$s \red t$}
\UnaryInfC{$\cE[s] \red \cE[t]$}
\DisplayProof
\end{tabular}
\caption{Reduction rules for \JLs{} terms}\label{fig:termredrules_simple}

\end{figure*}

The typing rules and reduction rules for \JLs{} are given in
Figures~\ref{fig:typerules_simple} and~\ref{fig:termredrules_simple}. They are similar to their
counterparts in \JLr, but greatly simplified: in particular, trail
inspection is allowed to reduce by nondeterministically picking any trail.

Unsurprisingly, erasure preserves well-typedness and reduction:
\begin{lemma}\label{lem:erasetchk}
If $\CTX \vdash_\JLr M : A | s$, then $\CTX \vdash_\JLs |M| : |A|$.
\end{lemma}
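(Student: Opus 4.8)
The plan is to proceed by structural induction on the derivation of $\tchk{\CTX}{M}{A}{s}$, showing in each case that applying the erasure map turns the \JLr{} typing rule into the corresponding \JLs{} rule of Figure~\ref{fig:typerules_simple}. Before starting the induction I would record two routine facts about erasure. First, erasure is invariant under code substitution on types: $|A\subst{u}{r}| = |A|$ (and likewise for simple-variable substitution), which follows by induction on the structure of $A$, since a code can occur in a type only inside an audited annotation $\AU{s}A$, and that annotation is discarded by $|{\cdot}|$. Second, erasure commutes with the branch-type function $\cT$, i.e. $|\cT^B(\psi)| = \cT^{|B|}(\psi)$ for every label $\psi$; this is immediate from the definition of $\cT^B$, whose outputs are built from $B$ using only $\imp$, which erasure preserves. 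I would also fix the obvious homomorphic extension of $|{\cdot}|$ to contexts, so that $|\Gamma, a:A| = |\Gamma|, a:|A|$ and $|\Delta, u::A| = |\Delta|, u::|A|$.

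Most cases are then immediate. For \textsc{T-Var} and \textsc{T-mVar}, the membership $a:A \in \Gamma$ (resp. $u::A \in \Delta$) erases to $a:|A| \in |\Gamma|$ (resp. $u::|A| \in |\Delta|$), matching the corresponding \JLs{} axiom. The cases \textsc{T-Abs} and \textsc{T-App} follow by applying the induction hypothesis to the premises and using that erasure is a homomorphism on types and terms ($|A \imp B| = |A| \imp |B|$, $|\lambda a^A.M| = \lambda a^{|A|}.|M|$, $|M\,N| = |M|\,|N|$). The case \textsc{T-Bang} is where the trail premise $\qchk{\Delta;\cdot}{q}{s \direq_A t}$ is simply dropped: erasure sends $\bang_q M$ to $\bang|M|$ and $\AU{s}A$ to $\Box|A|$, so the induction hypothesis on the term premise $\tchk{\Delta;\cdot}{M}{A}{t}$ supplies exactly the single premise needed by the \JLs{} bang rule. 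In \textsc{T-Let} the conclusion type is $C\subst{u}{r}$; here the first auxiliary fact gives $|C\subst{u}{r}| = |C|$, so the two induction hypotheses (noting $|\Delta, u::A| = |\Delta|, u::|A|$) combine to yield the \JLs{} let rule with result type $|C|$. Finally, \textsc{T-TI} uses the second auxiliary fact: the induction hypothesis on each branch gives $\qchk{\Delta;\cdot}{|\vartheta(\psi)|}{|\cT^B(\psi)|}$, and since $|\cT^B(\psi)| = \cT^{|B|}(\psi)$ and $\dom(|\vartheta|) = \dom(\vartheta) \ni \qd$, these assemble into the \JLs{} trail-inspection rule for $\tinsp(|\vartheta|)$ at type $|B|$.

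There is no genuinely hard step here; the only points requiring care are the two commutation facts above, and of these the substitution-invariance $|C\subst{u}{r}| = |C|$ in the \textsc{T-Let} case is the one most easily overlooked, since it is what reconciles the \JLr{} conclusion type $C\subst{u}{r}$ --- which genuinely depends on the code $r$ --- with the code-free \JLs{} type $|C|$. Everything else is a mechanical transcription of each \JLr{} rule to its \JLs{} image, and the induction closes immediately.
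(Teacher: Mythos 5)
Your proof is correct and is exactly the routine structural induction the authors evidently have in mind: the paper states this lemma without proof (introducing it with ``Unsurprisingly, erasure preserves well-typedness\dots''), so there is no written argument to diverge from. The two commutation facts you isolate --- $|C\subst{u}{r}| = |C|$ for the \textsc{T-Let} case and $|\cT^B(\psi)| = \cT^{|B|}(\psi)$ for \textsc{T-TI} --- are precisely the only non-mechanical points, and you handle them correctly.
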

\begin{lemma}\label{lem:erasered}
$M \red_\JLr N \Longrightarrow |M| \red_\JLs |N|$
\end{lemma}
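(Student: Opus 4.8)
The plan is to proceed by induction on the derivation of $M \red_\JLr N$, equivalently by case analysis on the last reduction rule applied, with rule \textsc{B-Bang} supplying the single inductive case. Since every \JLr reduction fires under an outermost bang, both $M$ and $N$ have the shape $\bang_q(\cdot)$, and each corresponding \JLs reduction will be obtained by contracting a redex under the outer bang, using the \JLs congruence rule with the context $\bang\,|\cF|[\blacksquare]$. The whole argument rests on four routine commutation lemmas, each proved by a straightforward structural induction and each exploiting the fact that erasure discards all trail information: (i) erasure commutes with hole-filling, $|\cF[X]| = |\cF|[|X|]$, and sends box-free \JLr contexts to \JLs contexts; (ii) erasure commutes with simple substitution, $|M\subst{a}{N}| = |M|\subst{a}{|N|}$; (iii) erasure turns the trail-producing audited substitution into ordinary audited substitution, $|M \ltimes \subst{u}{(N,q,t)}| = |M|\subst{u}{|N|}$; and (iv) erasure commutes with trail inspection, $|q\vartheta| = |q|\,|\vartheta|$, where on the right $|q|$ is the erased trail and $|\vartheta|$ the erased branches.

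For \textsc{B-$\beta$}, writing $M = \bang_q\cF[(\lambda a^A.P)\,Q]$, lemma (i) gives $|M| = \bang\,|\cF|[(\lambda a^{|A|}.|P|)\,|Q|]$, while lemmas (i)--(ii) give $|N| = \bang\,|\cF|[\,|P|\subst{a}{|Q|}\,]$; the \JLs $\beta$-rule contracts the redex and congruence lifts it under $\bang\,|\cF|[\blacksquare]$. The case \textsc{B-$\beta_\Box$} is analogous, except that the contractum carries the audited substitution: lemma (iii) identifies its erasure with exactly the ordinary \JLs substitution produced by the box-beta rule $\tlet(\bang s, u.t) \red \ldots$, so the augmented trail $q_f$ simply vanishes. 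For \textsc{B-TI}, erasing $N = \bang_{q''}\cF[q\vartheta]$ yields $\bang\,|\cF|[\,|q\vartheta|\,]$; by lemma (iv) this equals $\bang\,|\cF|[\,|q|\,|\vartheta|\,]$, and since the \JLs inspection rule $\tinsp(\theta) \red q'\theta$ may pick \emph{any} trail, we instantiate $q' := |q|$ to fire $\tinsp(|\vartheta|) \red_\JLs |q|\,|\vartheta|$ and then lift by congruence. Finally, \textsc{B-Bang} is immediate: from $P \red_\JLr R$ the induction hypothesis gives $|P| \red_\JLs |R|$, and \JLs congruence under $\bang\,|\cF|[\blacksquare]$ concludes.

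The only genuinely delicate ingredients are lemmas (iii) and (iv), since these are precisely the points where the two calculi differ substantively: in \JLr both audited substitution and trail inspection generate trail data that is absent in \JLs. In each case the resolution is that the extra structure --- the transitivity step inserted when $\ltimes$ crosses a bang, and the whole trail $q$ consumed by inspection --- lives entirely within positions that erasure collapses, so an induction following the defining clauses of $\ltimes$-substitution (respectively of $q\vartheta$) reduces it to the corresponding \JLs operation. I expect lemma (iii) to be the main obstacle, as its clause for bangs is recursive and its companion $\times$-clause must be checked not to interfere once trails are erased; once these commutation lemmas are in place the case analysis above is essentially mechanical. For the \textsc{B-TI} case one should also note that $|q|$ is a legitimate \JLs trail (the trail analogue of Lemma~\ref{lem:erasetchk}), although if the \JLs inspection rule genuinely admits an arbitrary trail, even this observation is unnecessary.
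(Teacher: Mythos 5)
Your proof is correct and follows exactly the route the paper intends: the paper states Lemma~\ref{lem:erasered} without proof (as ``unsurprising''), and the argument it implicitly relies on is precisely your rule-by-rule case analysis backed by the commutation lemmas for erasure versus hole-filling, substitution, $\ltimes$-substitution, and $q\vartheta$, with the nondeterministic \JLs{} inspection rule instantiated at the erased trail in the \textsc{B-TI} case. Your identification of lemma (iii) as the only non-homomorphic point (the extra transitivity trail inserted when $\ltimes$ crosses a bang lives entirely in erased positions) is accurate.
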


By the combination of Lemma~\ref{lem:erasetchk} and Lemma~\ref{lem:erasered}, we know that if \JLs{} is strongly normalizing, then \JLr{} must also, \emph{a fortiori}, be strongly normalizing. We will thus proceed to prove SN in the simpler system, and extend it to \JLr as a corollary.

\subsection{Summary of the proof}
We now give the definition of candidates of reducibility \emph{\`a la
  Girard}. 

\begin{definition}[neutral term]
A term is neutral if it is not of the form $\lambda a^A.s$ or $! s$.
\end{definition}

\begin{definition}[candidates of reducibility]\label{defn:cr}
A set $\cC$ of terms is a \emph{candidate of reducibility} iff it satisfies Girard's CR conditions:
\begin{description}
\item[CR1] $\cC \subseteq \SN$
\item[CR2] $s \in \cC \land s \red t \Longrightarrow t \in \cC$
\item[CR3] $s \in \NT \land (\forall t. s \red t \Longrightarrow t \in \cC) \Longrightarrow s \in \cC$.
\end{description}
The set of candidates of reducibility will be denoted $\CR$.
\end{definition}

We identify certain subsets of candidates that are stable wrt. validity substitution: 
\begin{definition}[substitutive sub-candidate]
For all candidates $\cC$, validity variables $u$, and sets of terms $\cD$, we define its substitutive subset $\subCR{\cC}{u}{\cD}$ as $\subCR{\cC}{u}{\cD} \triangleq \{ s\in \cC : \forall t \in \cD, s\subst{u}{t} \in \cC\}$.
\end{definition}


Notice that sub-candidates are not in $\CR$; they do, however, satisfy CR1 and CR2, which is enough for us to use them in the definition of \emph{reducible sets}.

\begin{definition}[reducible set]
For all types $\tau$, the set $\Red{\tau}$ of reducible terms of type $\tau$ is defined by recursion on $\tau$ as follows:
\begin{itemize}
\item $\Red{P} = \SN$
\item $\Red{\tau \imp \sigma} = \{ s : \forall t \in \Red{\tau}, (s~t) \in \Red{\sigma} \}$
\item $\Red{\Box \tau} = \{ s : \forall u, \forall \cC \in \CR, \forall t \in \subCR{\cC}{u}{\Red{\tau}}, \tlet(u := s, t) \in \cC \}$
\end{itemize}
\end{definition}

In particular, $s \in \Red{\Box \tau}$ if, and only if, for all
reducibility candidates $\cC$ and audited variables $u$, if we take a
term $t \in \subCR{\cC}{u}{\Red{\tau}}$, then $\tlet(u := s, t) \in \cC$.
We are allowed to use $\Red{\tau}$ in $\subCR{\cC}{u}{\Red{\tau}}$ because
$\tau$ is a subexpression of $\Box \tau$.
thus we need to quantify over all candidates $\cC$.

\begin{lemma}\label{lem:redCR}
For each type $\tau$, $\Red{\tau} \in \CR$.
\end{lemma}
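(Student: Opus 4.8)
The plan is to prove $\Red{\tau} \in \CR$ by induction on the structure of $\tau$, verifying CR1, CR2 and CR3 simultaneously in each case; this lets me appeal to all three conditions for the strictly smaller types that occur in the definition of $\Red{\tau}$, following the classical argument for System~F~\cite{proofstypes}.

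For the atomic case $\Red{P} = \SN$ the three conditions hold directly: CR1 and CR2 are immediate, and CR3 is the standard fact that a term all of whose one-step reducts are strongly normalizing is itself strongly normalizing; this also records $\SN \in \CR$, which I reuse below. For $\tau \imp \sigma$ I have $\Red{\tau}, \Red{\sigma} \in \CR$ by induction and reason as usual. CR1: a simple variable is neutral with no reducts, hence belongs to $\Red{\tau}$ by CR3, so $(s\,a) \in \Red{\sigma} \subseteq \SN$ forces $s \in \SN$. CR2: from $s \red s'$ and $t \in \Red{\tau}$ we get $(s\,t) \red (s'\,t) \in \Red{\sigma}$ by CR2 for $\sigma$. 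CR3: for neutral $s$ whose reducts all lie in $\Red{\tau\imp\sigma}$ and any $t \in \Red{\tau} \subseteq \SN$, I show $(s\,t) \in \Red{\sigma}$ using CR3 for $\sigma$; since $s$ is neutral the only reducts of $(s\,t)$ reduce $s$ or $t$, dealt with respectively by the hypothesis on $s$ and by an inner well-founded induction on $t$.

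The genuinely new case is $\Box \tau$, whose reducible set is defined through $\tlet$ and the substitutive sub-candidates $\subCR{\cC}{u}{\Red{\tau}}$; here I rely on $\SN \in \CR$ and on the fact, noted just after its definition, that each $\subCR{\cC}{u}{\Red{\tau}}$ still satisfies CR1 and CR2. For CR1 I instantiate membership in $\Red{\Box\tau}$ with $\cC = \SN$ and a fresh simple variable $a$: since $a$ and $a\subst{u}{r} = a$ are strongly normalizing, $a \in \subCR{\SN}{u}{\Red{\tau}}$, so $\tlet(u := s, a) \in \SN$, and an infinite reduction of $s$ would lift through the context $\tlet(u := \blacksquare, a)$, whence $s \in \SN$. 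CR2 is again context closure of $\red$ inside the let, using CR2 for $\cC$. For CR3 I fix $u$ and $\cC \in \CR$ and prove $\tlet(u := s, t) \in \cC$ for every $t \in \subCR{\cC}{u}{\Red{\tau}}$ by well-founded induction on $t$ (which is in $\SN$ by CR1 for the sub-candidate). The term $\tlet(u := s, t)$ is neutral, so by CR3 for $\cC$ it suffices to check its reducts; as $s$ is neutral it is not a box, so no head box-elimination contraction is available, leaving only $\tlet(u := s', t)$ with $s \red s'$ — handled by the outer hypothesis $s' \in \Red{\Box\tau}$ together with $t \in \subCR{\cC}{u}{\Red{\tau}}$ — and $\tlet(u := s, t')$ with $t \red t'$ — handled by the inner induction.

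I expect the last point to be the crux: to apply the inner induction I must know that the reduct $t'$ is still in $\subCR{\cC}{u}{\Red{\tau}}$. This is exactly where CR2 for the sub-candidate is used: $t' \in \cC$ by CR2 for $\cC$, and for each $r \in \Red{\tau}$ reduction commutes with audited-variable substitution, giving $t\subst{u}{r} \red t'\subst{u}{r}$ and hence $t'\subst{u}{r} \in \cC$ by CR2 for $\cC$ once more. The main obstacle is therefore the bookkeeping of the two nested inductions in the $\Box$ case — the outer induction on the type and the inner well-founded induction on the argument $t$ — rather than any single deep step; the only auxiliary fact needed beyond the CR conditions themselves is the commutation of reduction with substitution.
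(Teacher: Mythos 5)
Your proof is correct and follows essentially the same route as the paper's: structural induction on $\tau$ deferring the atomic and arrow cases to the standard System~F argument, with the $\Box\tau$ case handled exactly as in the paper — CR1 by instantiating with $\cC = \SN$ and a variable unaffected by the substitution, CR2 by one-step closure of $\cC$, and CR3 by a well-founded inner induction on $t$ using that $\subCR{\cC}{u}{\Red{\tau}}$ satisfies CR1 and CR2 (which, as you note, rests on commutation of reduction with substitution). No gaps.
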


\begin{theorem}\label{thm:snmain}
Let $\Delta,\Gamma,\vect{\eta}$ s.t. $\dom(\vect{\eta}) = \dom(\Delta) \cup \dom(\Gamma)$, for all $u \in \dom(\Delta), \vect{\eta}(u) \in \Red{\Delta(u)}$, and for all $a \in \dom(\Gamma), \vect{\eta}(a) \in \Red{\Gamma(a)}$. Then, $\Delta; \Gamma \vdash s : \tau$ implies $s \vect{\eta} \in \Red{\tau}$.
\end{theorem}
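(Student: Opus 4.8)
The plan is to prove the statement by induction on the derivation of $\Delta; \Gamma \vdash s : \tau$, following the standard structure of Girard's reducibility argument but taking particular care with the trail-inspection rule. The variable cases are immediate: $a\vect{\eta} = \vect{\eta}(a) \in \Red{\Gamma(a)}$ and $u\vect{\eta} = \vect{\eta}(u) \in \Red{\Delta(u)}$ hold directly by the hypotheses on $\vect{\eta}$. The application case is equally direct, since the induction hypotheses give $s\vect{\eta} \in \Red{\tau \imp \sigma}$ and $t\vect{\eta} \in \Red{\tau}$, and because $(s\,t)\vect{\eta} = (s\vect{\eta})\,(t\vect{\eta})$ the definition of $\Red{\tau \imp \sigma}$ yields membership in $\Red{\sigma}$.

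For the abstraction case I would use $\Red{\sigma} \in \CR$ (Lemma~\ref{lem:redCR}) together with a closure property for $\lambda$-redexes. To show $(\lambda a^\tau.(s\vect{\eta}))\,t \in \Red{\sigma}$ for an arbitrary $t \in \Red{\tau}$, I observe that this term is neutral, so by \textbf{CR3} it suffices to check that all its one-step reducts lie in $\Red{\sigma}$. The principal contractum $(s\vect{\eta})\subst{a}{t}$ equals $s$ under $\vect{\eta}$ augmented by $t/a$, hence is reducible by the induction hypothesis; the remaining reducts arise from reducing inside the body or the argument and are handled by an inner well-founded induction on the $\SN$-reduction of these subterms (finite since candidates are contained in $\SN$ by \textbf{CR1}), using \textbf{CR2} to keep the pieces reducible. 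The box-introduction case is analogous: membership of $\bang(s\vect{\eta})$ in $\Red{\Box\tau}$ amounts to showing $\tlet(u := \bang(s\vect{\eta}), t) \in \cC$ for every candidate $\cC$ and every $t \in \subCR{\cC}{u}{\Red{\tau}}$. This term is again neutral, and its principal reduct $t\subst{u}{s\vect{\eta}}$ lies in $\cC$ precisely because $s\vect{\eta} \in \Red{\tau}$ (by the induction hypothesis, noting that the premise types $M$ in an empty simple context) and $t$ belongs to the substitutive sub-candidate; since sub-candidates satisfy \textbf{CR2}, an inner induction on reductions in the subterms closes the case.

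The box-elimination case is where the substitutive sub-candidates earn their keep. Here the induction hypothesis gives $s\vect{\eta} \in \Red{\Box\tau}$, and I instantiate the definition of $\Red{\Box\tau}$ at the candidate $\cC = \Red{\sigma}$, which is legitimate by Lemma~\ref{lem:redCR}. It then remains to verify that the continuation $t\vect{\eta}$ lies in $\subCR{\Red{\sigma}}{u}{\Red{\tau}}$, i.e.\ that $t\vect{\eta} \in \Red{\sigma}$ and that $(t\vect{\eta})\subst{u}{r} \in \Red{\sigma}$ for every $r \in \Red{\tau}$. Both follow from the induction hypothesis applied to the continuation under $\vect{\eta}$ extended by $r/u$ on the audited variable $u$, which still satisfies the hypotheses of the theorem.

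I expect the trail-inspection case to be the main obstacle. Since $\tinsp(\theta\vect{\eta})$ is neutral, \textbf{CR3} reduces the goal to showing that each reduct lies in $\Red{\sigma}$; reducts that contract inside the branches are controlled by \textbf{CR2} and the induction hypotheses, but the genuinely new reducts are of the form $q(\theta\vect{\eta})$ for an \emph{arbitrary} trail $q$, reflecting the nondeterministic inspection semantics of \JLs. To handle these I would isolate an auxiliary lemma stating that whenever each branch $\theta(\psi)\vect{\eta}$ is reducible at its prescribed type $\cT^\sigma(\psi)$ (which follows from the induction hypotheses on the branch subderivations), then $q(\theta\vect{\eta}) \in \Red{\sigma}$ for \emph{every} trail $q$. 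This lemma is proved by structural induction on $q$, unfolding the definition of $q\theta$: the leaf cases ($\qrefl$, $\qba$, $\qbb$, $\qoti$, $\qotr_{[]}$, and the default) return a branch value of type $\sigma$ directly, while the recursive cases ($\qtrans$, $\qapp$, $\qlam$, $\qlet$, $\qotr_{::}$) apply a branch of arrow type $\cT^\sigma(\psi)$ to already-reducible recursive results, so reducibility follows by iterating the definition of $\Red{\cdot \imp \cdot}$. The delicate point is that $q$ ranges over all trails rather than only those recording the actual history, so the argument must be uniform in $q$ and cannot exploit any invariant relating $q$ to the term; establishing this uniform lemma, and verifying that the arities used in the definition of $q\theta$ match the branch types $\cT^\sigma(\psi)$, is the crux of the whole proof.
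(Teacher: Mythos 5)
Your proposal is correct and follows essentially the same route as the paper: induction on the typing derivation using the standard System F technique, with the box-introduction and trail-inspection cases isolated as auxiliary closure lemmas (the paper's Lemmas~\ref{lem:redbang} and~\ref{lem:redti}), each established by CR3 plus a well-founded induction over reducts, and with the arbitrary-trail reducts $q\theta$ handled by an inner structural induction on $q$ exactly as you describe. Your treatment of box-elimination via instantiating $\Red{\Box\tau}$ at $\cC = \Red{\sigma}$ and checking membership in the substitutive sub-candidate also matches the paper's argument.
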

\begin{proof}
We use the standard technique for System F~\cite{proofstypes}, with additional subproofs for $s \in \Red{\tau} \implies \bang s \in \Red{\Box \tau}$ and $(\forall \psi \in \dom(\theta), \theta(\psi) \in \Red{\cT^\sigma(\psi)}) \implies \tinsp(\theta) \in \Red{\sigma}$. SN follows as a corollary when $\vect{\eta}$ is the identity substitution.
\end{proof}

\subsection{Formalization}
We formalized \JLr, together with our proof of strong normalization,
in Nominal Isabelle. Nominal Isabelle mechanizes the management of
variable binding, relieving the user from the burden of defining
binding infrastructure (e.g. de~Bruijn indices, lifting operations,
etc.). On the other hand, many of the definitions used in a nominal
formalization must be proved to be ``well-behaved'', beyond what would
usually be made explicit in a pencil-and-paper proof.  The main
well-behavedness property required in Nominal Isabelle is
\emph{equivariance}, stating that a function $f$ or a set $\cS$ is
stable under finite permutations of names $\pi$:
\[
\forall x.f(\pi \cdot x) = \pi \cdot f(x)  \qquad  
\forall x.x \in \cS \iff \pi \cdot x \in \cS
\]
Most of the definitions used in the proof are equivariant, including typing and reduction rules, the set $\SN$ of strongly normalizing terms, the set $\CR$ of reducibility candidates, reducibility sets $\Red{A}$ for all types $A$, and the operator $\subCR{\cC}{u}{\cD}$.

We do not, however, prove equivariance for individual reducibility candidates $\cC \in \CR$: on the contrary, reducibility candidates do not need to be equivariant. To prove it, one can take a closure operator $[\cdot]$ mapping a set of strongly normalizing terms to the smallest reducibility candidate containing it (for its existence and definition see e.g.~\cite{Riba2007}): it is easy to see that, for all variables $a, b$, $\bang a \in [\{\bang b \}]$ if and only if $a = b$.

Our impredicative definition of $\Red{\Box \tau}$, which quantifies
over arbitrary candidates, is handled gracefully by Nominal
Isabelle. Lindley and Stark~\cite{Lindley2005} show a technique
($\top\top$-lifting) that can be adapted to provide a predicative
definition of $\Red{\Box A}$. As it happens, the lower logical
complexity of $\top\top$-lifting relies on the additional definition
of \emph{continuations}, which would require some more effort to be
formalized.  On the other hand, our approach seems likely to extend
to handle structural recursion (System T),
following Aschieri and Zorzi~\cite{aschieri13tlca}, or impredicative
polymorphism (System F), following Girard et al.~\cite{proofstypes}.

\section{Related work}\label{sec:related}
The first Justification Logic-style system, known as the Logic of
Proofs, was introduced by Artemov~\cite{Artemov2001,Artemov2008}.
Most work on justification logic systems (as for modal logic) is
presented via Hilbert-style axiom systems extending classical
propositional logic. Pfenning and Davies' judgemental reconstruction
of modal logic~\cite{Pfenning2001} provides a natural deduction-style proof system for
(intuitionistic) modal logic with necessity and possibility
modalities.  Artemov and Bonelli~\cite{Artemov2007} introduced a
system $\lambda^\mathbf{I}$ for justification logic, extending 
Pfenning and Davies' treatment of necessity ($\Box A$).  They introduced
equality proofs (trails) to recover subject reduction and proved
strong normalization for $\lambda^\mathbf{I}$. Bavera and Bonelli later introduced (outer) trail inspection as part of an extended calculus called \JLh~\cite{Bavera2015} from which we took inspiration.

Our system \JLr seems (at least to us) an improvement over \JLh: it is
simpler, and avoids the inconsistency arising from viewing trails as
equivalence proofs.  Nevertheless, the technique we used to prove
strong normalization in \JLr seems to suffice for \JLh as well. The
only complication concerns the definition of reduction: while in \JLr
reduction acts non-locally by updating the trail in the nearest
enclosing bang, in \JLh reduction produces an intermediate term
annotated with a new local trail, and the trail in the enclosing bang
is updated after a sequence of \emph{permutation reductions}:
\[
\bang_q \cF[M] \red \bang_q \cF[q' \derives M'] \stackrel{*}{\red} \bang_{q''} \cF[M']
\]
It is thus necessary, though not overly complicated, to redefine
reduction as a single-step operation including permutation reductions,
and prove its equivalence to the original definition. This allows us
to remove intermediate terms $q \derives M$ from the calculus
altogether.  In general, the (efficient) reduction theory of systems
such as \JLh and \JLr deserves further study.

Our work is also partly motivated by previous work on provenance and
tracing for functional languages.  Perera et al.~\cite{Perera2012}
presented a pure, ML-like core language in which program execution
yields both a value and a \emph{trace}, corresponding roughly to the
large-step operational derivation leading to the result.  They gave
trace slicing algorithms and techniques for extracting program slices
and differential slices from traces; in subsequent work Acar et
al.~\cite{Acar2013} explored security implications such as the
disclosure and obfuscation properties of traces~\cite{Cheney2011}.
Indeed, trail inspection can be considered as a generic
mechanism for defining \emph{provenance views} as considered by these
papers.  Although Bavera and Bonelli~\cite{Bavera2015} motivated \JLh
as a basis for history-based access control (following Abadi and
Fournet~\cite{Abadi2003}), we are not aware of comparable work on
provenance security based on justification logic. Our ongoing 
investigation suggests that \JLh-style trails contain enough 
information to extract many forms of provenance; however, to perform 
this extraction by means of trail inspection, we would usually need to
reverse beta-reductions, and in particular to undo the substitution
in beta-reduced terms. Since inspections treat beta and beta-box 
trails as black boxes, this cannot be achieved in the current version
of the calculus. An extension of the language providing transparent 
beta trails and operations to undo substitutions is the subject of 
our current study.

Audit has been considered by a number of security researchers
recently, for example Amir-Mohamedian et
al.~\cite{amir-mohamedian2016} consider correctness for audit logging,
but auditing is again an extralinguistic operation (implemented using
aspect-oriented programming).  Vaughan et al.~\cite{vaughan08csf}
introduced new formalisms for evidence-based audit.  In Aura, an
implementation of this approach~\cite{jia08icfp}, dependently-typed
programs execute in the presence of a policy specified in
authorization logic~\cite{abadi93toplas}, and whenever a restricted
resource is requested, a proof that access is authorized is
constructed at run time and stored in an audit log for later
inspection. The relationship between this approach and ours, and more
generally between justification logic and authorization logic, remains
to be investigated.

\section{Conclusions}
The motivation for this work is the need to provide better foundations
for audited computation, as advocated in recent work on provenance and
security and on type-theoretic forms of justification logic such as \JLh. However,
as we have shown, \JLh is at the same time overly restrictive (in
requiring affine variable and trail variable usage, i.e. forbidding
copying of ordinary data) and overly permissive: despite these
restrictions aimed at keeping the equational theory consistent, one
can still prove any two compatibly-typed proof codes equivalent, using
symmetry and the nondeterministic equational law for trail inspection.  

We presented a new calculus \JLr based on justification logic that
includes audit operations such as trail inspection, but has fewer
limitations and is simpler than \JLh.  We show that \JLh avoids this
problem and has a consistent reduction theory.  We also prove strong
normalization for \JLr via a simplified system \JLs, and we have
mechanically checked the proof.  This approach also seems sufficient
to prove SN for \JLh, though we have not formalized this result.

In future work, we intend to consider larger-scale programming
languages based on the ideas of \JLr, investigate connections to
provenance tracking and slicing techniques, and clarify the security
guarantees offered by justification logic-based audit.





\bibliography{paper}


\newpage

\appendix

\section{The type system of \JLh}\label{app:jlh}
In Figures~\ref{fig:codeJLh}, \ref{fig:trailJLh},
and~\ref{fig:termJLh}, we provide the typing rules of Bavera and
Bonelli's \JLh for reference, using syntax similar to that used in
this paper.

\begin{figure*}[p]
 \begin{tabular}{c}
 \rVS
 \AxiomC{$a:A \in \Gamma$} \RightLabel{\textsc{Var}}
 \UnaryInfC{$\qchkh{\CTX;\Sigma}{a}{A}$}
 \DisplayProof
 \quad
 \AxiomC{$u:A[\Sigma] \in \Delta$} 
 \AxiomC{$\Sigma\sigma \subseteq \Sigma'$} \RightLabel{\textsc{mVar}}
 \BinaryInfC{$\qchkh{\Delta;\Gamma;\Sigma'}{\pair{u,\sigma}}{A}$}
 \DisplayProof
 \\\rVS
 \AxiomC{$\qchkh{\Delta;\Gamma,a:A;\Sigma}{s}{B}$} \RightLabel{$\imp \Intro$}
 \UnaryInfC{$\qchkh{\CTX}{\lambda a^A.s}{A \imp B}$}
 \DisplayProof
 \quad
 \AxiomC{$\qchkh{\Delta;\Gamma_1;\Sigma_1}{s}{A \imp B}$} 
 \AxiomC{$\qchkh{\Delta;\Gamma_2;\Sigma_2}{t}{A}$}
  \RightLabel{$\imp \Elim$}
 \BinaryInfC{$\qchkh{\Delta;\Gamma_1,\Gamma_2;\Sigma_1,\Sigma_2}{s\;t}{B}$}
 \DisplayProof
 \\\rVS
 \AxiomC{$\qchkh{\Delta;\cdot;\Sigma}{s}{A}$}
 \AxiomC{$\qchkh{\Delta;\cdot;\Sigma}{q}{s =_A t}$} 
   \RightLabel{$\Box \Intro$}
 \BinaryInfC{$\qchkh{\Delta;\Gamma;\Sigma'}{\Sigma.t}{\AU{\Sigma.t}A}$} 
 \DisplayProof
 \quad
 \AxiomC{$\qchkh{\Delta;\Gamma_1;\Sigma_1}{s}{\AU{\Sigma.r} A}$} 
 \AxiomC{$\qchkh{\Delta,u:A[\Sigma];\Gamma_2;\Sigma_2}{t}{C}$} 
   \RightLabel{$\Box \Elim$}
 \BinaryInfC{$\qchkh{\Delta;\Gamma_1,\Gamma_2;\Sigma_1,\Sigma_2}{\tlet(u^{A[\Sigma]} := s, t)}{C\subst{u}{\Sigma.r}}$}
 \DisplayProof
 \\
 \rVS
 \AxiomC{$\alpha : \Eq(A) \in \Sigma$}
 \AxiomC{$\qchkh{\Delta;\cdot;\cdot}{\theta}{\cT^B}$} 
   \RightLabel{TI}
 \BinaryInfC{$\qchkh{\CTX;\Sigma}{\alpha\theta}{B}$} 
 \DisplayProof
 \quad
 \AxiomC{$\qchkh{\Delta;\Gamma;\Sigma}{s}{A}$}
 \AxiomC{$\qchkh{\Delta;\Gamma;\Sigma}{q}{s =_A t}$} 
   \RightLabel{Eq}
 \BinaryInfC{$\qchkh{\CTX;\Sigma}{t}{A}$} 
 \DisplayProof
\end{tabular}
\caption{Typing rules for \JLh proof codes}\label{fig:codeJLh}
 \begin{tabular}{c}
 \rVS
 \AxiomC{$\qchkh{\CTX;\Sigma}{s}{A}$} \RightLabel{\textsc{EqRefl}}
 \UnaryInfC{$\qchkh{\CTX;\Sigma}{\qrefl(s)}{s =_A s}$}
 \DisplayProof
 \quad
 \AxiomC{$\qchkh{\Delta;\Gamma_1,a:A;\Sigma_1}{s}{A \imp B}$}
 \AxiomC{$\qchkh{\Delta;\Gamma_2;\Sigma_2}{t}{A}$}
   \RightLabel{\textsc{Eq$\qba$}}
 \BinaryInfC{$\qchkh{\Delta;\Gamma_1,\Gamma_2;\Sigma_1,\Sigma_2}{\qba(a^A.s,t)}{s\subst{a}{t} =_B (\lambda a^A.s) \; t}$}
 \DisplayProof
 \\
 \rVS
 \AxiomC{$\Delta;\cdot;\Sigma_1 \vdash A | r$} 
 \AxiomC{$\Delta,u:A[\Sigma_1];\Gamma_2;\Sigma_2 \vdash C | t$}
 \AxiomC{$\Delta;\cdot;\Sigma_1 \vdash q : r =_A s$} 
 \AxiomC{$\Gamma_2 \subseteq \Gamma_3 \quad \Sigma_2 \subseteq \Sigma_3$} 
   \RightLabel{\textsc{Eq$\qbb$}}
 \QuaternaryInfC{$\qchkh{\Delta;\Gamma_3;\Sigma_3}
              {\qbb(\Sigma_1.s, u^{A[\Sigma_1]}.t)}
              {t\subst{u}{\Sigma_1.s} =_{C\subst{u}{\Sigma_1.s}}
              \tlet(u^{A[\Sigma_1]} := s,t)}$}
 \DisplayProof
 \\
 \rVS
 \AxiomC{$\Delta;\cdot;\Sigma_1 \vdash q : s =_A t $}
 \AxiomC{$\Delta;\cdot;\cdot \vdash \cT^B | \theta$}
 \AxiomC{$\alpha : \Eq(A) \in \Sigma_2$}
  \RightLabel{$\textsc{EqTI}$}
 \TrinaryInfC{$\qchkh{\Delta;\Gamma;\Sigma_2}{\qti(\vartheta,\alpha)}
  {q\theta =_B \alpha\theta}$}
 \DisplayProof
 \\
 \rVS
 \AxiomC{$\qchkh{\CTX;\Sigma}{q}{s =_A t}$} \RightLabel{\textsc{EqSym}}
 \UnaryInfC{$\qchkh{\CTX;\Sigma}{\qsym(q)}{t =_A s}$}
 \DisplayProof
 \quad
 \AxiomC{$\qchkh{\CTX;\Sigma}{q_1}{r =_A s}$} 
 \AxiomC{$\qchkh{\CTX;\Sigma}{q_2}{s =_A t}$} 
   \RightLabel{\textsc{EqTrans}}
 \BinaryInfC{$\qchkh{\CTX;\Sigma}{\qtrans(q_1,q_2)}{r =_A t}$}
 \DisplayProof
 \\
 \rVS
 \AxiomC{$\qchkh{\Delta;\Gamma,a:A;\Sigma}{q}{s =_B t}$}
   \RightLabel{\textsc{EqAbs}}
 \UnaryInfC{$\qchkh{\CTX;\Sigma}{\qlam(a^A.q)}{\lambda a^A.s =_{A \imp B} \lambda a^A.t}$} 
 \DisplayProof
 \\
\rVS
 \AxiomC{$\qchkh{\Delta;\Gamma_1;\Sigma_1}{q_1}{s_1 =_{A \imp B} t_1}$} 
 \AxiomC{$\qchkh{\Delta;\Gamma_2;\Sigma_2}{q_2}{s_2 =_A t_2}$}
   \RightLabel{\textsc{EqApp}}
 \BinaryInfC{$\qchkh{\Delta;\Gamma_1,\Gamma_2;\Sigma_1,\Sigma_2}{\qapp(q_1,q_2)}{s_1 \; s_2 =_B t_1 \; t_2}$} 
 \DisplayProof
 \\
 \rVS
 \AxiomC{$\qchkh{\Delta;\Gamma_1;\Sigma_1}{q_1}{s_1 =_{\AU{\Sigma_r} A} t_1}$}
 \AxiomC{$\qchkh{\Delta, u:A[\Sigma];\Gamma_2;\Sigma_2}{q_2}{s_2 =_C t_2}$}
   \RightLabel{\textsc{EqLet}}
 \BinaryInfC{$\qchkh{\Delta;\Gamma_1,\Gamma_2;\Sigma_1,\Sigma_2}{\qlet(q_1,u^{A[\Sigma]}.q_2)}{\tlet(u^{A[\Sigma]} := s_1,s_2) 
    =_{C\subst{u}{\Sigma.r}}
    \tlet(u^{A[\Sigma]} := t_1,t_2)}$} 
 \DisplayProof
 \\
 \rVS
 \AxiomC{$\alpha : \Eq(A) \in \Sigma$}
 \AxiomC{$\qchkh{\Delta;\cdot;\cdot}{\vect{q}}{\theta =_{\cT^B} \theta'}$}
   \RightLabel{\textsc{EqTrpl}}
 \BinaryInfC{$\qchkh{\CTX;\Sigma}{\qtr(\alpha,\vect{q})}{\alpha\theta =_B \alpha\theta'}$} 
 \DisplayProof
\end{tabular}
\caption{Typing rules for \JLh trails}\label{fig:trailJLh}
 \begin{tabular}{c}
 \rVS
 \AxiomC{$a:A \in \Gamma$} \RightLabel{\textsc{TVar}}
 \UnaryInfC{$\tchkh{\CTX;\Sigma}{a}{A}{a}$}
 \DisplayProof
 \quad
 \AxiomC{$u:A[\Sigma] \in \Delta$} 
 \AxiomC{$\Sigma\sigma \subseteq \Sigma'$} \RightLabel{\textsc{TmVar}}
 \BinaryInfC{$\tchkh{\Delta;\cdot;\Sigma'}{\pair{u,\sigma}}{A}{\pair{u,
   \sigma}}$}
 \DisplayProof
 \\
\rVS
 \AxiomC{$\tchkh{\Delta;\Gamma,a:A;\Sigma}{M}{B}{s}$} \RightLabel{\textsc{TAbs}}
 \UnaryInfC{$\tchkh{\CTX;\Sigma}{\lambda a^A.M}{A \imp B}{\lambda a^A.s}$}
 \DisplayProof
 \quad
 \AxiomC{$\tchkh{\Delta;\Gamma_1;\Sigma_1}{M}{A \imp B}{s}$}
 \AxiomC{$\tchkh{\Delta;\Gamma_2;\Sigma_2}{N}{A}{t}$}
  \RightLabel{\textsc{TApp}}
 \BinaryInfC{$\tchkh{\Delta;\Gamma_1,\Gamma_2;\Sigma_1,\Sigma_2}{M\;N}{B}{s\;t}$}
 \DisplayProof
 \\ \rVS
 \AxiomC{$\tchkh{\Delta;\cdot;\Sigma}{M}{A}{s}$}
 \AxiomC{$\qchkh{\Delta;\cdot;\Sigma}{q}{s = t}$} \RightLabel{\textsc{TBox}}
 \BinaryInfC{$\tchkh{\Delta;\Gamma;\Sigma'}{!^\Sigma_q M}{\AU{\Sigma.t}A}
   {\Sigma.t}$}
 \DisplayProof
 \\
 \rVS
 \AxiomC{$\tchkh{\Delta;\Gamma_1;\Sigma_1}{M}{\AU{\Sigma.r}A}{s}$} 
 \AxiomC{$\tchkh{\Delta,u:A[\Sigma];\Gamma_2;\Sigma_2}{N}{C}{t}$} 
   \RightLabel{\textsc{TLet}}
 \BinaryInfC{$\tchkh{\Delta;\Gamma_1,\Gamma_2;\Sigma_1,\Sigma_2}{\tlet(u^{A[\Sigma]} := 
   M,N)}{C\subst{u}{\Sigma.r}}{\tlet(u^{A[\Sigma]} := s,t)}$}
 \DisplayProof
\\ \rVS
 \AxiomC{$\alpha : \Eq(A) \in \Sigma$} 
 \AxiomC{$\tchkh{\Delta;\cdot;\cdot}{\vartheta}{\cT^B}{\theta}$} 
   \RightLabel{\textsc{TTI}}
 \BinaryInfC{$\tchkh{\CTX;\Sigma}{\alpha\vartheta}{B}{\alpha\theta}$} 
 \DisplayProof
\quad
 \AxiomC{$\tchkh{\Delta;\Gamma;\Sigma}{M}{A}{s}$} 
 \AxiomC{$\qchkh{\Delta;\Gamma;\Sigma}{q}{s =_A t}$} 
   \RightLabel{\textsc{TEq}}
 \BinaryInfC{$\tchkh{\CTX;\Sigma}{q \derives M}{A}{t}$}
 \DisplayProof
\end{tabular}
\caption{Typing rules for \JLh terms}\label{fig:termJLh}
\end{figure*}

\subsection{Inconsistency of the equational theory of \JLh}
In Section~\ref{sec:intro}, we provided an informal derivation tree
showing that the equational theory of \JLh allows to equate any two
proof codes of the same type. We give below a more accurate (but isomorphic)
derivation tree. Then, if we choose a $\theta$ such that
$\theta(\qrefl) = s$ and $\theta(\qba) = t$, this effectively proves
$s = t$, where $s$ and $t$ can be chosen freely:

\begin{center}
\small
	\begin{prooftree}
      \AxiomC{$\qchkh{\cdot;\cdot;\cdot}{\qrefl(s')}{s' =_B s'}$} \noLine
      \UnaryInfC{$\qchkh{\cdot;\cdot;\cdot}{\theta}{\cT^A}$}
\RightLabel{\textsc{EqTI}}
	\UnaryInfC{$\qchkh{\cdot;\cdot;\alpha:\Eq(A)}{\qti(\theta,\alpha)}{\qrefl(s')\theta =_A \alpha\theta}$}
	
      \AxiomC{$\qchkh{\cdot;\cdot;\cdot}{\qba(b^{B'}.t',t'')}{t'\subst{b}{t''} =_{B''} (\lambda b^{B'}.t')~t''}$}
      \noLine
      \UnaryInfC{$\qchkh{\cdot;\cdot;\cdot}{\theta}{\cT^A}$}
\RightLabel{\textsc{EqTI}}
	\UnaryInfC{$\qchkh{\cdot;\cdot;\alpha:\Eq(A)}{\qti(\theta,\alpha)}{\qba(b^{B'}.t',t'')\theta =_A \alpha\theta}$}
\RightLabel{\textsc{EqSym}}
	\UnaryInfC{$\qchkh{\cdot;\cdot;\alpha:\Eq(A)}{\qsym(\qti(\theta,\alpha))}{\alpha\theta =_A \qba(b^{B'}.t',t'')\theta}$}
	
\RightLabel{\textsc{EqTrans}}
	\BinaryInfC{$\qchkh{\cdot;\cdot;\alpha:\Eq(A)}{\qtrans(\qti(\theta,\alpha),\qsym(\qti(\theta,\alpha)))}{\qrefl(s')\theta =_A \qba(b^{B'}.t',t'')\theta}$}
	\end{prooftree}
\end{center}
This derivation contains a lot of extraneous details, so we summarize
the main features which contribute to the final result.  In this
derivation, we use transitivity on two subderivations.  The right-hand
subderivation shows that $s = \qrefl(s')\theta$ is equal to
$\alpha \theta$ , using the rule \textsc{EqTI}.  Here, $s'$ is an
arbitrary well-formed term of some type $B$.  
The \textsc{EqTI}
rule allows us to conclude that any code of the form $q\theta$ is
equal to $\alpha\theta$, and here we have used $q = \qrefl(s')$.  

In the second subderivation we use rule \textsc{EqTI} again to
conclude that $t = \qba(b^{B'}.t',t'')\theta$ is equivalent to $\alpha\theta$.
Here, terms $t'$ ad $t''$ are arbitrary well-formed terms of
appropriate types to form a redex $\lambda b. t')~t''$.  We then use
symmetry to show that $\alpha\theta$ is equivalent to
$\qba(b^{B'}.t',t'')\theta$.  The desired conclusion follows.  Since we
can easily choose $\theta$ so that it returns one proof code $s$ when
handling a reflexivity trail and a different proof code $t$ when
handling a $\beta$-redex trail, the above derivation can be modified
to prove that any two proof codes (of the same type $A$) are
equivalent.  

Finally, in applying the transitivity rule \textsc{EqTrans}, it is
important that this rule does \emph{not} require affine use of the
trail variable $\alpha$, so that it can be used in both
subderivations.  Changing this might fix the above problem, but would
also have unknown consequences on the rest of the system.  We have
instead focused on an alternative system that does not require trail
variables or affineness, by removing the symmetry trail.

\if 0
\section{Inner trail inspection}

It is also possible to inspect the trail of a
value with an audited type
(which we call \emph{inner} trail inspection, $\tinsp(s,\theta)$ and
$\tinsp(M,\vartheta)$, where $s$ and $M$ are respectively the audited
code and term whose trails are being inspected). 

\subsection{Syntax}
The inner trail inspection operation can be added to the syntax of
codes and terms as follows:

\begin{eqnarray*}
  s & ::= & \cdots \mid \tinsp(s,\theta)\\
M &::=& \cdots  \mid \tinsp(M,\vartheta)\\
\psi &::=& \cdots \mid \orelse \qiti \orelse \qitr_{[]} \orelse
           \qitr_{::} \\
q, q' &::=& \cdots \mid \qiti(q,\theta) \mid \qitr(q,\zeta) \\
\end{eqnarray*}

\subsection{Typing rules}
\begin{figure*}[tb]
  \centering
  \begin{tabular}{c}
 \AxiomC{$\qd \in \dom(\theta)$}
 \AxiomC{$\qchk{\CTX}{s}{\Box A}$} \noLine
 \BinaryInfC{$\left[\qchk{\Delta;\cdot}{\theta(\psi)}{\cT^B(\psi)}\right]_{\psi \in \dom(\theta)}$} 
   \RightLabel{\textsc{iTI}}
 \UnaryInfC{$\qchk{\CTX}{\tinsp(s,\theta)}{B}$} 
\DisplayProof
 \qquad
 \AxiomC{$\qd \in \dom(\vartheta)$} \noLine
 \AxiomC{$\dom(\vartheta) = \dom(\theta)$} \noLine
 \BinaryInfC{$\tchk{\CTX}{M}{\Box A}{s}$} \noLine
 \UnaryInfC{$\left[\tchk{\Delta;\cdot}{\vartheta(\psi)}{\cT^B(\psi)}{\theta(\psi)}\right]_{\psi \in \dom(\vartheta)}$} 
   \RightLabel{\textsc{T-iTI}}
 \UnaryInfC{$\tchk{\CTX}{\tinsp(M,\vartheta)}{B}{\tinsp(s,\theta)}$} 
 \DisplayProof
\\ \rVS
 \AxiomC{$\qd \in \dom(\vartheta)$} \noLine
 \AxiomC{$\qchk{\Delta;\cdot}{q}{s \direq_A t}$}
 \noLine
 \BinaryInfC{$\left[\qchk{\Delta;\cdot}{\theta(\psi)}{\cT^B(\psi)}\right]_{\psi \in \dom(\theta)}$} 
  \RightLabel{\textsc{Q-iTI}}
 \UnaryInfC{$\qchk{\CTX}{\qiti(q,\theta)}
  {\tinsp(\bang t,\theta) \direq_B q\theta}$}
 \DisplayProof
 \qquad
 \AxiomC{$\qd \in \dom(\zeta)$} \noLine
 \AxiomC{$\qchk{\CTX}{q}{s \direq_{\Box A} t}$} \noLine
 \BinaryInfC{$\dom(\zeta) = \dom(\theta) = \dom(\theta')$} \noLine
 \UnaryInfC{$\left[\qchk{\Delta;\cdot}{\zeta(\psi)}{\theta(\psi) \direq_{\cT^B(\psi)} \theta'(\psi)}\right]_{\psi \in \dom(\zeta)}$}
   \RightLabel{\textsc{Q-iTIB}}
 \UnaryInfC{$\qchk{\CTX}{\qitr(q,\zeta)}{\tinsp(s,\theta) \direq_B \tinsp(t,\theta')}$} 
 \DisplayProof
 \\\rVS  
 \AxiomC{$\qd \in \dom(\vartheta)$} 
 \AxiomC{$\dom(\vartheta) = \dom(\theta)$} 
 \AxiomC{$\left[\qchk{\Delta;\cdot}{\vartheta(\psi)}{\cT^B(\psi)}\right]_{\psi \in \dom(\vartheta)}$} 
 \TrinaryInfC{$\qchk{\CTX}{\tinsp(\vartheta)}{B}$} 
 \DisplayProof

  \end{tabular}
  \caption{Inner trail inspection}
  \label{fig:inner-trail-inspection}
\end{figure*}

In the case of $\qiti$, we must make sure that $q$ is a valid history for the term $\bang t$ under inspection by checking that its target matches $t$.

\subsection{Reduction}
The evaluation contexts are extended as follows:
\[
\begin{array}{rcl}
\cE & ::= & \cdots \orelse \tinsp(R, \{\vect{M/\psi}, \cE/\psi', \vect{N/\psi''} \}) \orelse \tinsp(\cE,\vartheta) \smallskip\\
\cQ & ::= & \cdots \qitr(\cQ,\zeta) \orelse \qitr(q_0,\{ \vect{q_1/\psi}, \cQ/\psi', \vect{q_2/\psi''} \})
\end{array}
\]
The definition of the canonical trail context is defined as follows:
\begin{align*}
\cQ_{\tinsp(\cF,\{ \vect{ M/ \psi } \})} & \triangleq \qitr(\cQ_\cF,\{ \vect{\qrefl(\code(M))/\psi} \}) 
\\
\cQ_{\tinsp(R,\{ \vect{M/\psi}, \cF/\psi', \vect{N/\psi''} \})} & \triangleq \qitr(\qrefl(\code(R)),\{ \vect{\qrefl(\code(M))/\psi}, Q_\cF/\psi',\vect{\qrefl(\code(N))/\psi''} \})
\end{align*}

The reduction rule for inner trail inspection is as follows:

\begin{center}
  \begin{tabular}{c}
    \AxiomC{\phantom{$A$}} \RightLabel{\textsc{B-iTI}}
    \UnaryInfC{$\bang_q \cF[\tinsp(\bang_{q'} M,\vartheta)] \red \bang_{\qtrans(q,\cQ_\cF[\qiti(q',\code(\vartheta))])} \cF[q'\vartheta]$}
    \DisplayProof
    \qquad
    \AxiomC{$\tinsp(!M,\vartheta) \red q\vartheta$}
    \DisplayProof
  \end{tabular}
\end{center}

\fi

\section{Full definitions}


\subsection{Substitution on codes}
\label{app:codesubst}
For $\gamma = \subst{a}{t}$ and $\delta = \subst{u}{t}$, we define substitution of simple and audited variables on a code $r$ (notation: $r\gamma$ and $r\delta$) as follows:
{\small{
\begin{align*}
b \gamma & \triangleq \gamma(b) \\
v \gamma & \triangleq v \\
(\lambda b^A.s)\gamma & \triangleq \lambda b^A.s\gamma & \mbox{($b \# a,t$)} \\
(s \; s')\gamma & \triangleq (s\gamma) \; (s'\gamma) \\
(\bang s)\gamma & \triangleq \bang s \\
\tlet(v^A := s, s')\gamma & \triangleq \tlet(v^A := s\gamma,s'\gamma) & \mbox{($v \# t$)} \\
\tinsp(\theta)\gamma & \triangleq \tinsp(\theta\gamma) \\
\\
b \delta & \triangleq b  
\\
v \delta & \triangleq \delta(v) 
\\
(\lambda b^A.s)\delta & \triangleq \lambda b^A.s\delta & \mbox{($b \# t$)}
\\
(s \; s')\delta & \triangleq (s\delta) \; (s'\delta)
\\
(\bang s)\delta & \triangleq \bang (s\delta)
\\
\tlet(v^A := s, s')\delta & \triangleq \tlet(v^A := s\delta,s'\delta) & \mbox{($v \# u,t$)}
\\
\tinsp(\theta)\delta & \triangleq \tinsp(\theta\delta)
\end{align*}
}}
where $\theta\gamma$ and $\theta\delta$ are defined pointwise and the notation $\gamma(b)$ is defined as $t$ if $a=b$, and as $b$ otherwise ($\delta(v)$ is defined similarly).

\subsection{Source and target of a trail}
\label{app:srctgt}

The source and target of a trail $q$ (notation: $\src(q)$ and $\tgt(q)$) are defined as follows:
{\small{
\begin{align*}
\src(\qrefl(s)) & \triangleq s 
\\
\src(\qtrans(q_1,q_2)) & \triangleq \src(q_1)
\\ 
\src(\qba(a^A.s_1,s_2)) & \triangleq (\lambda a^A.s_1)~s_2
\\
\src(\qbb(s_1,v^A.s_2)) & \triangleq \tlet(v^A := \bang s_1, s_2) 
\\
\src(\qoti(q',\theta)) & \triangleq \tinsp(\theta)
\\
\src(\qlam(a^A.q')) & \triangleq \lambda a^A.\src(q')
\\
\src(\qapp(q', q'')) & \triangleq (\src(q')~\src(q''))
\\
\src(\qlet(q',v^A.q'')) & \triangleq \tlet(v^A := \src(q'), \src(q''))
\\
\src(\qotr(\zeta)) & \triangleq \tinsp(\src(\zeta))
\\
\\
\tgt(\qrefl(s)) & \triangleq s
\\
\tgt(\qtrans(q_1,q_2)) & \triangleq \tgt(q_2)
\\
\tgt(\qba(a^A.s_1,s_2)) & \triangleq s_1\subst{a}{s_2}
\\
\tgt(\qbb(s_1,v^A.s_2)) & \triangleq s_2\subst{v}{s_1}
\\
\tgt(\qoti(q',\theta)) & \triangleq q'\theta
\\
\tgt(\qlam(a^A.q')) & \triangleq \lambda a^A.\tgt(q') 
\\
\tgt(\qapp(q', q'')) & \triangleq (\tgt(q')~\tgt(q'')) 
\\
\tgt(\qlet(q',v^A.q'')) & \triangleq \tlet(v^A := \tgt(q'), \tgt(q''))
\\
\tgt(\qotr(\zeta)) & \triangleq \tinsp(\tgt(\zeta))
\end{align*}
}}
where $\src(\zeta)$ and $\tgt(\zeta)$ are defined pointwise.

\subsection{Audited variable substitution}
\label{app:avarsubst}

Given $\delta = \subst{u}{(N,q,t)}$, fix $\delta' = \subst{u}{t}$. Then, the operations $M \times \delta$ and $M \ltimes \delta$ are defined as follows:
{\small{
\begin{align*}
a \times \delta &\triangleq \qrefl(a)
\\
v \times\delta &\triangleq \begin{cases}
  q & (u = v) \\
  \qrefl(v) & (u \neq v)
  \end{cases} 
\\
(\lambda a^A.R) \times\delta &\triangleq \qlam(a^A.R \times\delta) & (\heartsuit)
\\
(R \; S) \times \delta &\triangleq \qapp(R \times\delta, S \times\delta)
\\
(\bang_{q'} R) \times \delta &\triangleq \qrefl(\bang (\src(q')\delta'))
\\
\tlet(v^A := R, S) \times\delta &\triangleq \qlet(R \times\delta,v^A.S \times\delta) & (\spadesuit)
\\
\tinsp(\vartheta) \times \delta &\triangleq \qotr(\vartheta\times\delta)
\\
\\
a \ltimes \delta & \triangleq a
\\
v \ltimes\delta & \triangleq \begin{cases}
  N & (u = v) \\
  v & (u \neq v)
  \end{cases} 
\\
(\lambda a^A.R) \ltimes\delta & \triangleq \lambda a^A.R \ltimes\delta & (\heartsuit)
\\
(R \; S) \ltimes \delta & \triangleq (R \ltimes\delta) \; (S \ltimes\delta)
\\
(\bang_{q'} R) \ltimes \delta & \triangleq \bang_{\qtrans(q'\delta',R \times\delta)} (R \ltimes \delta)
\\
\tlet(v^A := R, S) \ltimes\delta & \triangleq \tlet(v^A := R \ltimes\delta,u.S \ltimes\delta) & (\spadesuit)
\\
\tinsp(\vartheta) \ltimes \delta & \triangleq \tinsp(\vartheta \ltimes\delta)
\end{align*}
\[
\begin{array}{llll}
(\heartsuit) & a \# N,q,t &
(\spadesuit) & v \# u,N,q,t
\end{array}
\]
}}
where $\vartheta\times\delta$ and $\vartheta\ltimes\delta$ are defined pointwise on $\vartheta$.

\subsection{Codes}
\label{app:codes}

The code of a term $M$ (notation: $\code(M)$) is defined as follows:
{\small{
\begin{align*}
\code(a) & \triangleq a 
\\
\code(u) & \triangleq u
\\
\code(\lambda a^A.N) & \triangleq \lambda a^a.\code(N)
\\
\code(N~R) & \triangleq (\code(N)~\code(R))
\\
\code(\bang_q N) & \triangleq \bang\src(q)
\\
\code(\tlet(u^A := N, R)) & \triangleq \tlet(u^A := \code(N), \code(R))
\\
\code(\tinsp(\vartheta)) & \triangleq \tinsp(\code(\vartheta))
\end{align*}
}}
where $\code(\vartheta)$ is defined pointwise.

\section{Proofs}

\subsection{Strong normalization}
The proofs of strong normalization and related lemmata were only
sketched because of space constraints. This appendix provides more
detailed proofs and explanations, which summarize the mechanized
formalization in Nominal Isabelle. For convenience, we explicitly
state the well-founded induction principle we use in some of the
following proofs:

\begin{theorem}[induction principle for well-founded relations]\label{thm:wfind}
Let $R \subseteq X \times X$ a well-founded relation on some set $X$, and $P \subseteq X$ a predicate on the same set. Suppose that if $x \in X$ and for all $y \in X$ such that $y \mathrel{R} x$ we have $P~y$, then $P~x$. Then, $P~z$ holds for all $z \in X$.
\end{theorem}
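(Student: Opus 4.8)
The plan is to derive the induction principle from the standard characterization of well-foundedness by a short proof by contradiction. Recall that $R$ being well-founded on $X$ can be taken to mean that every nonempty subset of $X$ has an $R$-minimal element, i.e.\ an element with no $R$-predecessor lying inside the subset; classically this is equivalent to the absence of infinite $R$-descending chains. I would begin by fixing $R$, $X$, and $P$ satisfying the hypothesis, namely that $P~x$ holds whenever $P~y$ holds for every $y$ with $y \mathrel{R} x$ (call this the \emph{progressiveness} of $P$).

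First I would assume, for contradiction, that the conclusion fails, so that the set $S \triangleq \{ x \in X : \neg P~x \}$ is nonempty. Applying well-foundedness to $S$ yields an $R$-minimal element $m \in S$: there is no $y \in S$ with $y \mathrel{R} m$. Rephrasing minimality, every $R$-predecessor $y$ of $m$ lies outside $S$, i.e.\ satisfies $P~y$. The progressiveness hypothesis, instantiated at $x = m$, then forces $P~m$, which contradicts $m \in S$. Hence $S$ is empty and $P~z$ holds for all $z \in X$, as required.

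This argument contains no genuine obstacle; the only point requiring care is the choice of formulation of well-foundedness from which one starts. If instead one takes well-foundedness to mean the nonexistence of infinite descending chains, the same contradiction is reached by constructing such a chain --- choosing $x_0 \in S$ and, given $x_n \in S$, using the contrapositive of progressiveness to find $x_{n+1} \mathrel{R} x_n$ with $x_{n+1} \in S$ --- but this variant invokes dependent choice, whereas the minimal-element route does not. In our Nominal Isabelle development the result is in fact immediate, since well-foundedness is already recorded there as precisely this induction scheme; the statement is included mainly to make explicit the principle used in the strong-normalization proofs, where $R$ is the reduction relation restricted to $\SN$.
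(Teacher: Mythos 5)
Your proof is correct. The paper itself offers no proof of this statement: it is recorded purely ``for convenience,'' as a standard principle to be cited later (and, as you correctly guess at the end, in the Nominal Isabelle formalization well-foundedness of a relation is essentially taken to \emph{be} this induction scheme, so nothing needs to be derived there). Your derivation from the minimal-element characterization --- forming $S = \{x \in X : \neg P\,x\}$, extracting an $R$-minimal $m$, and contradicting progressiveness --- is the standard classical argument and is sound; your side remark distinguishing it from the descending-chain formulation (which requires dependent choice to recover induction) is also accurate. Nothing is missing; if anything, your writeup supplies a justification the paper deliberately omits.
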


The principle can also be applied to tuples of strongly normalizing terms by taking the lexicographic extension of the reduction relation, which is also well-founded.

Some standard results hold in \JLs. The following substitution lemma can be proved by a routine induction.
\begin{lemma}[substitution lemma]\label{lem:subst}\mbox{}
\begin{enumerate}
\item $a \# r \Longrightarrow s\subst{a}{t}\subst{u}{r} = s\subst{u}{r}\subst{a}{t\subst{u}{r}}$
\item $u \# v,r \Longrightarrow s\subst{u}{t}\subst{v}{r} = s\subst{v}{r}\subst{u}{t\subst{v}{r}}$
\end{enumerate}
\end{lemma}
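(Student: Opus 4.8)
The plan is to prove both equations by structural induction on the code $s$, running the two parts in parallel since they share the same inductive skeleton. I would use the nominal (strong) structural induction principle so that, in the binder cases $\lambda a'^{\tau}.s'$, $\tlet(w^{\tau} := s_1, s_2)$, and the branches of $\tinsp(\theta)$, the bound variable may be assumed fresh for all the names and codes occurring in the goal — that is, fresh for $a,t,u,r$ in part~(1) and for $u,t,v,r$ in part~(2). This is exactly the freshness that makes both substitutions descend under the binder without capture and without altering the substituted codes.

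The actual content sits in the variable base cases, which is where the freshness hypotheses are consumed. In part~(1), taking $s = u$, the left-hand side $u\subst{a}{t}\subst{u}{r}$ rewrites to $r$ (the simple substitution ignores the audited variable, then the audited substitution replaces it), while the right-hand side $u\subst{u}{r}\subst{a}{t\subst{u}{r}}$ rewrites to $r\subst{a}{t\subst{u}{r}}$; these agree precisely because $a \# r$ gives $r\subst{a}{\cdot} = r$. In part~(2), the case $s = v$ is analogous: only the right-hand side applies the outer substitution to $v$, and the hypothesis $u \# r$ yields $r\subst{u}{\cdot} = r$; the case $s = u$ instead uses $u \neq v$ (the other half of $u \# v, r$) to simplify $u\subst{v}{r}$ to $u$. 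Every remaining variable case — $s$ a simple variable, and $s$ an audited variable distinct from both substituted variables — closes immediately by unfolding the two substitutions, with no appeal to freshness.

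The congruence cases are routine. For application I apply the induction hypothesis to both subterms; for the three binding forms I push both substitutions inside (justified by the fresh-binder assumption) and invoke the induction hypothesis on each immediate subexpression, pointwise over the branches $\vect{s/\psi}$ in the $\tinsp(\theta)$ case, since substitution on $\theta$ is defined componentwise.

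The one case that behaves asymmetrically, and the only point that requires care beyond bookkeeping, is $\bang s'$. Under the modal discipline of \JLs{} (inherited from the definitions in Appendix~\ref{app:codesubst}), simple substitution does not descend under a bang whereas audited substitution does. Consequently, in part~(1) both sides reduce directly to $\bang(s'\subst{u}{r})$ — the simple substitution is blocked on the left, and blocked again on the right after the audited one has entered — so this case needs no induction hypothesis at all; in part~(2) both substitutions descend and the case closes by the induction hypothesis like the other congruences. I do not anticipate a genuine obstacle: the lemma is routine, and the only delicate points are threading the freshness side-conditions through the binders and remembering the bang asymmetry in part~(1), both of which are discharged cleanly by Nominal Isabelle's strong induction principle and its native handling of binding.
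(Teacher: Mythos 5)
Your proof is correct and matches the paper, which simply states that the lemma ``can be proved by a routine induction'' and leaves the details to the reader (and to the Isabelle formalization). You carry out exactly that structural induction, correctly identifying where each freshness hypothesis is consumed in the variable cases and correctly handling the asymmetry that simple substitution does not descend under $\bang$ while audited substitution does.
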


Consequently, we can prove the substitutivity of reduction.
\begin{lemma}[substitutivity of reduction]\label{lem:redsubst}
$s \red t \Longrightarrow s\subst{u}{r} \red t\subst{u}{r}$
\end{lemma}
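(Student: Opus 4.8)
The plan is to proceed by rule induction on the derivation of $s \red t$, taking as cases the three contraction rules of Figure~\ref{fig:termredrules_simple} together with the congruence rule. Throughout, I would invoke $\alpha$-conversion to assume that every binder occurring in $s$ is fresh for $u$ and $r$, so that $\subst{u}{r}$ commutes with each binder without capture; this freshness is exactly what is needed to discharge the side-conditions of Lemma~\ref{lem:subst}.

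For the $\beta$-contraction $((\lambda a.s_1)\,t_1) \red s_1\subst{a}{t_1}$, applying $\subst{u}{r}$ to the left-hand side gives $(\lambda a.s_1\subst{u}{r})\,(t_1\subst{u}{r})$, which $\beta$-contracts to $(s_1\subst{u}{r})\subst{a}{t_1\subst{u}{r}}$; by the first part of Lemma~\ref{lem:subst} (with $a \# r$) this is precisely $(s_1\subst{a}{t_1})\subst{u}{r}$, the substituted contractum. The box-$\beta$ contraction $\tlet(\bang s_1, v.t_1) \red t_1\subst{v}{s_1}$ is analogous: after substitution the redex contracts to $(t_1\subst{u}{r})\subst{v}{s_1\subst{u}{r}}$, which the second part of Lemma~\ref{lem:subst}—applied with the two audited variables $u$ and $v$ in exchanged roles, and the freshness hypothesis $v \# u,r$—identifies with $(t_1\subst{v}{s_1})\subst{u}{r}$.

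The congruence case $\cE[s_0] \red \cE[t_0]$, arising from $s_0 \red t_0$, follows from the induction hypothesis $s_0\subst{u}{r} \red t_0\subst{u}{r}$ together with the routine fact that substitution distributes over hole-filling, namely $\cE[s_0]\subst{u}{r} = (\cE\subst{u}{r})[s_0\subst{u}{r}]$ (using freshness of the context's bound variables for $u,r$); a single application of the congruence rule in the context $\cE\subst{u}{r}$ then closes the case.

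The main obstacle is the trail-inspection rule $\tinsp(\theta) \red q\theta$, for two reasons. First, since substitution acts pointwise on branches, $\tinsp(\theta)\subst{u}{r} = \tinsp(\theta\subst{u}{r})$ can reduce to $q'\,(\theta\subst{u}{r})$ for an \emph{arbitrary} trail $q'$, so I must pick the witnessing trail explicitly, taking $q' = q\subst{u}{r}$. Second, to verify that this choice yields the correct reduct I would prove, by a separate structural induction on $q$, the commutation lemma $(q\theta)\subst{u}{r} = (q\subst{u}{r})(\theta\subst{u}{r})$ — that is, audited-variable substitution commutes with the inspection operation defining $q\theta$. This auxiliary lemma is where essentially all the work beyond Lemma~\ref{lem:subst} resides: each clause of the definition of $q\theta$ (reflexivity, transitivity, the congruences $\qlam,\qapp,\qlet$, and the variable-arity $\qotr_{[]},\qotr_{::}$ cases) must be shown to let the substitution pass through, while the pointwise action of $\subst{u}{r}$ on $\theta$ is threaded consistently. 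Once it is established, choosing $q\subst{u}{r}$ as the contractum gives $\tinsp(\theta)\subst{u}{r} \red (q\subst{u}{r})(\theta\subst{u}{r}) = (q\theta)\subst{u}{r}$, completing the case and hence the induction.
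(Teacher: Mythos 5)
Your proposal is correct and follows essentially the same route as the paper, whose proof is simply ``by induction on the derivation of reduction, using Lemma~\ref{lem:subst}''. You have merely made explicit the details the paper leaves implicit, in particular the choice of witness trail $q\subst{u}{r}$ and the auxiliary commutation $(q\theta)\subst{u}{r} = (q\subst{u}{r})(\theta\subst{u}{r})$ needed for the trail-inspection case.
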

\begin{proof}
By induction on the derivation of reduction, using Lemma~\ref{lem:subst}.
\end{proof}

We mentioned that our proof of strong normalization uses candidates of reducibility (Definition~\ref{defn:cr}): however, we found it convenient to also use a more general notion that we call
\emph{pre-candidate of reducibility} after Petit~\cite{Petit2009}.

\begin{definition}[pre-candidates of reducibility]
A set $\cC$ of terms is a \emph{pre-candidate of reducibility} iff it satisfies Girard's CR1 and CR2 conditions:
\begin{description}
\item[CR1] $\cC \subseteq \SN$
\item[CR2] $s \in \cC \land s \red t \Longrightarrow t \in \cC$
\end{description}
The set of pre-candidates of reducibility will be denoted $\PCR$.
\end{definition}

Substitutive sub-candidates of a candidate are pre-candidates.
\begin{lemma}\label{lem:subCRisPCR}
$\subCR{\cC}{u}{\cD} \in \PCR$
\end{lemma}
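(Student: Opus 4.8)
The plan is to unfold the definition of $\PCR$ and verify the two conditions CR1 and CR2 separately for the set $\subCR{\cC}{u}{\cD}$, leaning on the fact that $\cC$ is already a full candidate (so in particular satisfies CR1 and CR2) together with the substitutivity of reduction established in Lemma~\ref{lem:redsubst}.

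For CR1, the argument is immediate from the definition. Since $\subCR{\cC}{u}{\cD} = \{ s \in \cC : \forall t \in \cD,\, s\subst{u}{t} \in \cC \}$, every element of $\subCR{\cC}{u}{\cD}$ is in particular an element of $\cC$, so $\subCR{\cC}{u}{\cD} \subseteq \cC$. Because $\cC \in \CR$ satisfies CR1, we have $\cC \subseteq \SN$, and hence $\subCR{\cC}{u}{\cD} \subseteq \SN$ by transitivity of inclusion.

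For CR2, I would take $s \in \subCR{\cC}{u}{\cD}$ and a reduction $s \red t$, and show $t \in \subCR{\cC}{u}{\cD}$. This requires two facts. First, $t \in \cC$: this follows from $s \in \cC$ (membership in the sub-candidate implies membership in $\cC$) together with CR2 for $\cC$ applied to $s \red t$. Second, $t\subst{u}{r} \in \cC$ for every $r \in \cD$: fixing such an $r$, the hypothesis $s \in \subCR{\cC}{u}{\cD}$ gives $s\subst{u}{r} \in \cC$; applying Lemma~\ref{lem:redsubst} to $s \red t$ yields $s\subst{u}{r} \red t\subst{u}{r}$; and CR2 for $\cC$ then gives $t\subst{u}{r} \in \cC$. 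Combining these two facts establishes that $t$ meets both clauses in the definition of $\subCR{\cC}{u}{\cD}$, completing CR2.

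There is no genuine obstacle here: the statement is essentially a bookkeeping consequence of the fact that $\cC$ is closed under reduction and that reduction commutes with substitution. The only nontrivial ingredient is the substitutivity of reduction (Lemma~\ref{lem:redsubst}), which is exactly what lets a reduction on $s$ be transported to a reduction on the substituted term $s\subst{u}{r}$; once that lemma is in hand, the verification of both CR conditions is routine.
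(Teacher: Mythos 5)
Your proof is correct and follows essentially the same route as the paper's: CR1 from the inclusion $\subCR{\cC}{u}{\cD} \subseteq \cC \subseteq \SN$, and CR2 by combining Lemma~\ref{lem:redsubst} with CR2 for $\cC$ applied both to $s \red t$ and to the substituted reductions $s\subst{u}{r} \red t\subst{u}{r}$. If anything, your write-up spells out the two membership conditions slightly more explicitly than the paper's terse version, but the argument is identical.
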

\begin{proof}
Since $\subCR{\cC}{u}{\cD} \subseteq \cC$, CR1 follows immediately. To prove CR2, we assume $s \in \subCR{\cC}{u}{\cD}$ and $s \red s'$. By Lemma~\ref{lem:redsubst}, we prove that for all $t \in \cD$, $s\subst{u}{t} \red s'\subst{u}{t}$. From CR2 on the candidate $\cC$, we know $s' \in \cC$, thus $s' \in \subCR{\cC}{u}{\cD}$, which implies the thesis.
\end{proof}

\begin{remark}
We easily see that CR3 cannot hold for $\subCR{\cC}{u}{\cD}$, meaning it is a pre-candidate but not a candidate, even if $\cD$ were assumed to be a candidate. Otherwise, we would know $u \in \subCR{\cC}{u}{\cD}$, and consequently for all $N \in \cD$ we would have $u\subst{u}{N} = N \in \cC$. Hence, all candidates would need to be equal, but we know that there exist at least two different candidates: $\SN$ and the set of all variables.
\end{remark}

Some other immediate results about reducibility candidates, whose proof is standard, follow.
\begin{lemma}\mbox{}
\begin{enumerate}
\item $\SN \in \CR$
\item for all $\cC \in \CR$ and $s \in \NT$ in normal form, $s \in \cC$
\item in particular, for all reducibility candidates $\cC$ and all variables $x$, $x \in \cC$.
\end{enumerate}
\end{lemma}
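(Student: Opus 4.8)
The three claims follow directly from the CR conditions, so the plan is simply to discharge them in order, with almost all of the work concentrated in a single point.

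For claim (1) I would check CR1, CR2, and CR3 for $\SN$. CR1 is immediate since $\SN \subseteq \SN$, and CR2 is equally direct: any reduction sequence out of $t$ can be prefixed by the step $s \red t$ to give a reduction sequence out of $s$, so if every reduction sequence from $s$ is finite then so is every reduction sequence from $t$. The only condition requiring an actual argument is CR3. For this I would appeal to the characterization of strong normalization by which $s \in \SN$ holds exactly when every one-step reduct of $s$ lies in $\SN$; in the formalization $\SN$ is obtained as the accessible part of $\red$, so this closure property is essentially definitional and, where needed, is justified by the well-founded induction principle (Theorem~\ref{thm:wfind}). Given the CR3 hypothesis that $t \in \SN$ for every $t$ with $s \red t$, this characterization yields $s \in \SN$ at once. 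Note that the neutrality premise on $s$ is available but, for the candidate $\SN$, is not actually used.

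For claim (2), let $\cC \in \CR$, let $s \in \NT$, and suppose $s$ is in normal form. Being irreducible, $s$ admits no $t$ with $s \red t$, so the premise $\forall t.\, s \red t \Longrightarrow t \in \cC$ of CR3 holds vacuously. Since $s$ is neutral, CR3 then gives $s \in \cC$ directly. Claim (3) is the instance of (2) in which $s$ is a variable: every simple variable $a$ and every audited variable $u$ is neutral (it is neither an abstraction nor a bang) and is in normal form (no reduction rule applies to a bare variable), so it belongs to every candidate.

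The argument presents no genuine obstacle; the one place that deserves care is CR3 for $\SN$, which rests on the inductive/well-founded characterization of strong normalization rather than on a direct manipulation of reduction sequences, and which is precisely where the stated well-founded induction principle is invoked.
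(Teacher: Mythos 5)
Your proof is correct and is exactly the standard argument the paper has in mind: the paper omits the proof entirely, remarking only that it is ``standard,'' and your three steps (CR2 for $\SN$ by prefixing a step, CR3 for $\SN$ from the accessibility/well-founded characterization, and the vacuous CR3 instance for neutral normal forms, specialized to variables) are precisely that standard proof. No discrepancy to report.
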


\subsubsection*{Proof of Lemma~\ref{lem:redCR}}
\[
\mbox{For each type $\tau$,  $\Red{\tau} \in \CR$.}
\]
We prove that $\Red{\tau}$ satisfies the CR conditions by structural induction on $\tau$. We only consider the $\Red{\Box \tau}$ case and invite the interested reader to see~\cite{proofstypes} for atomic and function types.

\begin{description}
\item[CR1] given $s \in \Red{\Box \tau}$, we prove that $\tlet(u := s, v) \in \SN$, with $u \neq v$ (this follows from the definition of $\Red{\Box \tau}$, taking $\cC = \SN$, $t = v$, and noticing that $v \subst{u}{s'} = v$ for all $s' \in \Red{\tau}$). Then $s$ is a subterm of a strongly normalizing term, thus $s \in \SN$ as well.

\item[CR2] given $s \in \Red{\Box \tau}$ and $r$ s.t. $s \red r$, we need to prove that $r \in \Red{\Box \tau}$. By the definition of reducibility, we need to show that for all $\cC$ and for all $t \in \subCR{\cC}{u}{\Red{\tau}}$, $\tlet(u := r, t) \in \cC$. Since $s \in \Red{\Box \tau}$, by definition we have $\tlet(u := s, t) \in \cC$ and, since $\cC \in \CR$, we can perform a one step reduction and obtain $\tlet(u := r, t) \in \cC$ as needed.

\item[CR3] given $s$ neutral such that whenever $s \red s'$ we have $s' \in \Red{\Box \tau}$, we need to prove that $s \in \Red{\Box \tau}$: in other words, we assume $t \in \subCR{\cC}{u}{\Red{\tau}}$ for some candidate $\cC$, and we need to prove that $\tlet(u := s, t) \in \cC$. Since $\subCR{\cC}{u}{\Red{\tau}}$ is a pre-candidate (Lemma~\ref{lem:subCRisPCR}), by CR1 $t \in \SN$ and we can reason by well-founded induction. We consider all possible reducts of $\tlet(u := s, t)$:
  \begin{itemize}
  \item $\tlet(u := s, t) \red \tlet(u := s', t)$ with $s \red s'$: we know 
  by hypothesis that $s' \in \Red{\Box \tau}$, thus we have $\tlet(u := s', t) 
  \in \cC$ by definition of reducibility;
  \item $\tlet(u := s, t) \red \tlet(u := s, t')$ with $t \red t'$: by CR2 on 
  pre-candidates we know that $t' \in 
  \subCR{\cC}{u}{\Red{\tau}}$ and we obtain by the inner induction hypothesis on 
  $t'$ that $\tlet(u := s, t') \in \cC$;
  \item no other possible reduct, since $s$ is neutral.
  \end{itemize}
Then by CR3 on $\cC$, we know $\tlet(u := s, t) \in \cC$, as needed.
\end{description}

\begin{lemma}\label{lem:redbang}
$s \in \Red{\tau} \Longrightarrow \bang s \in \Red{\Box \tau}$
\end{lemma}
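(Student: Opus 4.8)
The plan is to unfold the definition of $\Red{\Box\tau}$ and check its defining condition directly. Fix an audited variable $u$, a candidate $\cC \in \CR$, and a term $t \in \subCR{\cC}{u}{\Red{\tau}}$; the goal becomes $\tlet(u := \bang s, t) \in \cC$. Two boundedness facts come for free. Since $s \in \Red{\tau}$ and $\Red{\tau} \in \CR$ (Lemma~\ref{lem:redCR}), CR1 gives $s \in \SN$; and since $\subCR{\cC}{u}{\Red{\tau}} \in \PCR$ (Lemma~\ref{lem:subCRisPCR}), CR1 gives $t \in \SN$. This licenses a well-founded induction on the pair $(s,t)$, ordered by the lexicographic extension of reduction (Theorem~\ref{thm:wfind}), where the statement being proved is $\tlet(u := \bang s', t') \in \cC$ for all $s' \in \Red{\tau}$ and $t' \in \subCR{\cC}{u}{\Red{\tau}}$.

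The key observation is that $\tlet(u := \bang s, t)$ is neutral (it is a $\tlet$, not an abstraction or a $\bang$), so by CR3 it suffices to show that every one-step reduct lies in $\cC$. I would enumerate the reducts: the congruence $\tlet(u := \bang s', t)$ with $s \red s'$, the congruence $\tlet(u := \bang s, t')$ with $t \red t'$, and the box-beta contraction $\tlet(u := \bang s, t) \red t\subst{u}{s}$. No other reduct exists, because the scrutinee $\bang s$ is already a box and so permits no head reduction other than box-beta. For the two congruence cases I would invoke the induction hypothesis: CR2 on $\Red{\tau}$ keeps $s' \in \Red{\tau}$, and CR2 on the pre-candidate $\subCR{\cC}{u}{\Red{\tau}}$ keeps $t' \in \subCR{\cC}{u}{\Red{\tau}}$, so each recursive instance is strictly smaller in the well-founded order while still satisfying the hypotheses, yielding membership in $\cC$.

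The box-beta case is the crux, and it is precisely what the substitutive sub-candidate was engineered for: $t \in \subCR{\cC}{u}{\Red{\tau}}$ means exactly that $t\subst{u}{r} \in \cC$ for every $r \in \Red{\tau}$, so instantiating $r := s$ (legitimate since $s \in \Red{\tau}$) delivers $t\subst{u}{s} \in \cC$ with no further work. Having shown all reducts are in $\cC$, CR3 closes the goal. The main obstacle is conceptual rather than computational: setting up the induction so that the two hypotheses $s \in \Red{\tau}$ and $t \in \subCR{\cC}{u}{\Red{\tau}}$ are visibly preserved along reductions of either component (which is what lets the congruence cases appeal to the induction hypothesis), while isolating the genuinely new content in the box-beta step, where the closure property built into $\subCR{\cC}{u}{\Red{\tau}}$ combines with $s \in \Red{\tau}$ to supply the reduct.
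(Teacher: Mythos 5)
Your proof is correct and follows essentially the same route as the paper's: well-founded induction on the strongly normalizing pair $(s,t)$, enumeration of the three possible reducts of the neutral term $\tlet(u := \bang s, t)$, CR2 for the two congruence cases, the defining closure property of $\subCR{\cC}{u}{\Red{\tau}}$ for the box-beta contraction, and CR3 to conclude. The only cosmetic difference is that you justify $t \in \SN$ via CR1 on the pre-candidate $\subCR{\cC}{u}{\Red{\tau}}$ where the paper cites $t \in \cC$; these amount to the same thing.
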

\begin{proof}
To prove that $\bang s \in \Red{\Box \tau}$, we need to show that for all choices of $\cC$ and $t \in \subCR{\cC}{u}{\Red{\tau}}$, we have $\tlet(u := \bang s, t) \in \cC$. Since $s \in \Red{\tau}$ and $t \in \cC$, by CR1 both terms are strongly normalizing, and we can reason by well-founded induction: then we consider all reducts of $\tlet(u := \bang s, t)$:
\begin{itemize}
\item $\tlet(u := \bang s', t)$ with $s \red s'$: we know $s' \in \Red{\tau}$ (by CR2), then by induction hypothesis $\tlet(u := \bang s', t) \in \cC$;
\item $\tlet(u := \bang s, t')$ with $t \red t'$: by CR2 on pre-candidates we prove $t' \in \subCR{\cC}{u}{\Red{\tau}}$, then by induction hypothesis $\tlet(u := \bang s, t') \in \cC$;
\item $t\subst{u}{s}$: since $s \in \Red{\tau}$ and $t \in \subCR{\cC}{u}{\Red{\tau}}$, we have $t\subst{u}{s} \in \cC$.
\end{itemize}
Then we obtain $\tlet(u := \bang s, t) \in \cC$ by CR3.
\end{proof}

\begin{lemma}\label{lem:redti}
$(\forall \psi \in \dom(\theta), \theta(\psi) \in \Red{\cT^\sigma(\psi)}) \Longrightarrow \tinsp(\theta) \in \Red{\sigma}$
\end{lemma}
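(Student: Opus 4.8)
The plan is to establish $\tinsp(\theta) \in \Red{\sigma}$ by invoking condition \textbf{CR3} on the neutral term $\tinsp(\theta)$. Since $\tinsp(\theta)$ is neither an abstraction nor a bang, it is neutral, and by Lemma~\ref{lem:redCR} we have $\Red{\sigma} \in \CR$; hence it suffices to show that every one-step reduct of $\tinsp(\theta)$ lies in $\Red{\sigma}$. In \JLs{} there are exactly two kinds of such reducts: a \emph{computation} reduct $q\theta$, for an arbitrary trail $q$ chosen by the nondeterministic trail-inspection rule, and a \emph{congruence} reduct $\tinsp(\theta')$ obtained by reducing a single branch of $\theta$ (so $\theta \red \theta'$ with $\theta'$ differing from $\theta$ in exactly one label).

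First I would isolate the computation reducts in an auxiliary sublemma: assuming $\qd \in \dom(\theta)$ and $\theta(\psi) \in \Red{\cT^\sigma(\psi)}$ for every $\psi \in \dom(\theta)$, then $q\theta \in \Red{\sigma}$ for \emph{every} trail $q$, regardless of whether $q$ is well-typed. This is proved by induction on the size of $q$. For the leaf constructors ($\qrefl$, $\qba$, $\qbb$, $\qoti$, and the empty case $\qotr(\{\})$), the result $q\theta$ is exactly $\theta(\psi)$ — or $\theta(\qd)$ when $\psi \notin \dom(\theta)$ — and lies in $\Red{\sigma}$ by hypothesis, using $\cT^\sigma(\psi) = \sigma$ on those labels and $\cT^\sigma(\qd) = \sigma$. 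For the unary and binary constructors ($\qlam$; and $\qtrans$, $\qapp$, $\qlet$, $\qotr_{::}$), $q\theta$ is an application of the corresponding branch — whose $\cT^\sigma$-type is $\sigma \imp \sigma$ respectively $\sigma \imp \sigma \imp \sigma$ — to the recursive results, each of which lies in $\Red{\sigma}$ by the induction hypothesis; membership of the whole application in $\Red{\sigma}$ then follows immediately from the definition of $\Red{\tau \imp \sigma}$. Note that the binding occurrences inside $q$ are discarded by $(-)\theta$, so no capture issue or typing obligation on $q$ ever arises.

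With the sublemma available, I would prove the statement by well-founded induction on $\theta$ ordered by reduction of its branches. This ordering is well-founded because each $\theta(\psi) \in \Red{\cT^\sigma(\psi)} \subseteq \SN$ by \textbf{CR1}, and the reduction relation on tuples of strongly normalizing terms is well-founded. In the induction step I enumerate the reducts of $\tinsp(\theta)$: every computation reduct $q\theta$ is in $\Red{\sigma}$ by the sublemma; and every congruence reduct $\tinsp(\theta')$ arises by reducing one branch, so by \textbf{CR2} on the candidate $\Red{\cT^\sigma(\psi)}$ the new branch set $\theta'$ still satisfies the reducibility hypothesis, whence the induction hypothesis gives $\tinsp(\theta') \in \Red{\sigma}$. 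Since all reducts lie in $\Red{\sigma}$ and $\tinsp(\theta)$ is neutral, \textbf{CR3} yields $\tinsp(\theta) \in \Red{\sigma}$.

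The hard part will be the sublemma, and within it the variable-arity constructor $\qotr$: its reduction $\qotr(\{q_0/\psi_0, \vect{q/\psi}\})\theta = \theta(\qotr_{::})\,(q_0\theta)\,(\qotr(\{\vect{q/\psi}\})\theta)$ recurses both on a subtrail $q_0$ and on the strictly shorter record $\qotr(\{\vect{q/\psi}\})$, which is not an immediate subterm in the ordinary structural sense. I would therefore induct on the total number of constructors in $q$, under which both recursive arguments are strictly smaller, rather than on immediate subterms; I must also check that the default clause ($\psi \notin \dom(\theta) \Rightarrow \psi(\_)\theta = \theta(\qd)$) is covered uniformly across all labels. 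The remaining bookkeeping — neutrality of $\tinsp(\theta)$ and the two-kinds-of-reducts case analysis — is routine.
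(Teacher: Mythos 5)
Your proof is correct and follows essentially the same route as the paper's: well-founded induction on the strongly normalizing branches of $\theta$ (justified by CR1), a case split on congruence reducts $\tinsp(\theta')$ (handled by CR2 and the induction hypothesis) versus computation reducts $q\theta$ (handled by an inner induction on $q$), and a final appeal to CR3. Your observation that the $\qotr$ clause recurses on $\qotr(\{\vect{q'/\psi'}\})$, which is not an immediate subterm, and therefore requires induction on the size of $q$ rather than plain structural induction, is a valid refinement of a point the paper's one-line treatment of the inner induction leaves implicit.
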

\begin{proof}
Similarly to the previous lemma, all the terms in the codomain of $\theta$ (which form a finite tuple) are strongly normalizing, thus we can reason by well-founded induction. We thus consider all reducts of $\tinsp(\theta)$:
\begin{itemize}
\item $\tinsp(\theta')$ where $\theta'$ is obtained from $\theta$ by means of a one-step reduction of a single term in its codomain, keeping all the other terms the same: we easily prove that for all $\psi$, $\theta'(\psi) \in \Red{\cT^\sigma(\psi)}$ (using CR2), then by induction hypothesis we have $\tinsp(\theta') \in \Red{\sigma}$;
\item $q \theta$: by an inner induction on the structure of $q$, we prove that $q \theta \in \Red{\sigma}$.
\end{itemize}
We thus obtain $\tinsp(\theta) \in \Red{\sigma}$ by CR3.
\end{proof}

We now prove strong normalization for \JLs. The theorem can be stated in a more elegant way by means of the following definition.

\begin{definition}[adapted interpretation]
Given contexts $\Delta$ and $\Gamma$, a sequence of substitutions $\vect{\eta} = \eta_1,\ldots,\eta_n$ is a \emph{$(\Delta;\Gamma)$-adapted interpretation} (notation: $\vect{\eta} \vDash \Delta;\Gamma$) if and only if:
\begin{itemize}
\item $\dom(\vect{\eta}) = \dom(\Delta) \cup \dom(\Gamma)$
\item for all $u \in \dom(\Delta)$ we have $\vect{\eta}(u) \in \Red{\Delta(u)}$
\item for all $a \in \dom(\Gamma)$ we have $\vect{\eta}(a) \in \Red{\Gamma(a)}$.
\end{itemize}
\end{definition}

\subsubsection*{Proof of Theorem~\ref{thm:snmain}}
\[
\mbox{If $\Delta; \Gamma \vdash s : \tau$, then for all $\vect{\eta} \vDash \Delta;\Gamma$ we have $s \vect{\eta} \in \Red{\tau}$.}
\]
By induction on the typing derivation. We consider the interesting cases:
\begin{itemize}
\item $s = \bang s'$, $\tau = \Box \sigma$ and, by induction hypothesis, for all ($\Delta;\cdot$)-adapted interpretations $\vect{\eta'}$ ($\bang$ is opaque to truth variables, so the corresponding context is empty) $s' \vect{\eta'} \in \Red{\sigma}$. We choose $\vect{\eta'}$ to be equal to $\vect{\eta}$ restricted to $\Delta$: then $(\bang s) \eta = \bang (s \eta')$ and the thesis follows from the induction hypothesis and Lemma~\ref{lem:redbang}.
\item $s = \tlet(u := R, S)$, and (by induction hypotheses), $R \eta' \in \Red{\Box \sigma}$ (for all ($\Delta;\Gamma$)-adapted $\eta'$) and $S \eta'' \in \Red{\tau}$ (for all ($\Delta,u::\sigma;\Gamma$)-adapted $\eta''$), with $u$ chosen to be fresh. Since clearly $R\vect{\eta} \in \Red{\Box \sigma}$, by the definition of $\Red{\Box \sigma}$ it suffices to show that $S\vect{\eta} \in \subCR{(\Red{\tau})}{u}{\Red{\sigma}}$ to prove $\tlet(u := R, S)\eta = \tlet(u := R\eta, S\eta) \in \Red{\tau}$. This is equivalent to proving that $S\vect{\eta} \in \Red{\tau}$ and $S\vect{\eta}\subst{u}{t} \in \Red{\tau}$ for all $t \in \Red{\sigma}$: both statements are implied by the induction hypothesis on $S$.
\item $s = \tinsp(\theta)$ and (by induction hypothesis) for all $\psi$, for all ($\Delta,\cdot$)-adapted interpretations $\eta'$ we have $\theta(\psi) \in \Red{\cT^\tau{\psi}}$. We choose $\eta'$ to be the suitable restriction of $\eta$ and we obtain $\tinsp(\theta)\eta = \tinsp(\theta\eta') \in \Red{\tau}$ by Lemma~\ref{lem:redti}. 
\end{itemize}

The two corollaries we are interested in follow immediately.

\begin{corollary}[strong normalization for \JLs]\label{cor:snJLs}
If $\Delta;\Gamma \vdash_\JLs s : \tau$, then $s \in \SN$.
\end{corollary}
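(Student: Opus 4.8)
The plan is to derive this corollary as an immediate instance of Theorem~\ref{thm:snmain}, by feeding it the identity substitution. Recall that the theorem concludes $s\vect{\eta} \in \Red{\tau}$ for every adapted interpretation $\vect{\eta} \vDash \Delta;\Gamma$; so if we can choose $\vect{\eta}$ in such a way that $s\vect{\eta} = s$, then the reducibility of $s$ follows immediately, and strong normalization comes for free from condition CR1.

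First I would take $\vect{\eta}$ to be the identity substitution, mapping each audited variable $u \in \dom(\Delta)$ to $u$ and each simple variable $a \in \dom(\Gamma)$ to $a$. The only thing to verify is that this is a genuine $(\Delta;\Gamma)$-adapted interpretation, i.e.\ that $u \in \Red{\Delta(u)}$ for every $u \in \dom(\Delta)$ and $a \in \Red{\Gamma(a)}$ for every $a \in \dom(\Gamma)$. This is exactly where the earlier ``immediate results'' lemma on reducibility candidates is used: since Lemma~\ref{lem:redCR} guarantees $\Red{A} \in \CR$ for every type $A$, and every variable belongs to every reducibility candidate, both membership conditions hold. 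Hence $\vect{\eta} \vDash \Delta;\Gamma$.

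With $\vect{\eta}$ in hand, applying Theorem~\ref{thm:snmain} to the hypothesis $\Delta;\Gamma \vdash_\JLs s : \tau$ yields $s\vect{\eta} \in \Red{\tau}$. Because $\vect{\eta}$ is the identity, $s\vect{\eta} = s$, so $s \in \Red{\tau}$. Finally, invoking Lemma~\ref{lem:redCR} once more to obtain $\Red{\tau} \in \CR$, condition CR1 gives $\Red{\tau} \subseteq \SN$, and therefore $s \in \SN$, as required.

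I do not expect any real obstacle here: essentially all the work is already discharged by Theorem~\ref{thm:snmain} and Lemma~\ref{lem:redCR}, and the corollary is little more than an instantiation. The single point that deserves explicit mention is the verification that the identity substitution is adapted, which is not quite vacuous --- it rests on the fact that free variables are reducible, itself a consequence of CR3 applied to variables (which are neutral and in normal form). Everything else is routine bookkeeping.
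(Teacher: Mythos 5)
Your proposal is correct and follows exactly the paper's route: instantiate Theorem~\ref{thm:snmain} with the identity substitution, check adaptedness via the fact that variables lie in every reducibility candidate (neutral normal forms plus CR3), and conclude with CR1 applied to $\Red{\tau} \in \CR$. The paper states this in one line; you have merely spelled out the same bookkeeping.
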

\begin{proof}
Immediately from Theorem~\ref{thm:snmain}, by replacing $\vect{\eta}$ with the identity substitution.
\end{proof}


\begin{corollary}[strong normalization for \JLr]\label{co:snJLr}
If $\Delta;\Gamma \vdash_\JLr M : A | s$, then $M \in \SN$.
\end{corollary}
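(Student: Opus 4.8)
The plan is to derive this directly from the strong normalization of the simplified calculus \JLs, exactly as anticipated in the remark following Lemma~\ref{lem:erasered}. The entire substance of the normalization argument --- the construction of reducibility candidates, the impredicative definition of $\Red{\Box\tau}$, and the auxiliary results Lemmas~\ref{lem:redbang} and~\ref{lem:redti} --- has already been discharged in Theorem~\ref{thm:snmain} and its Corollary~\ref{cor:snJLs}. What remains for \JLr is only to show that the erasure map $|\cdot|$ faithfully transports nontermination from \JLr into \JLs, so that strong normalization of the image forces strong normalization of the source.

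Concretely, I would argue by contradiction. Suppose $\Delta;\Gamma \vdash_\JLr M : A \orelse s$ holds but $M \notin \SN$; then there is an infinite reduction sequence $M = M_0 \red_\JLr M_1 \red_\JLr M_2 \red_\JLr \cdots$ in \JLr. Applying Lemma~\ref{lem:erasered} to each step produces an infinite sequence $|M_0| \red_\JLs |M_1| \red_\JLs |M_2| \red_\JLs \cdots$ in \JLs, which witnesses that $|M|$ is \emph{not} strongly normalizing. On the other hand, Lemma~\ref{lem:erasetchk} applied to the typing hypothesis yields $\Delta;\Gamma \vdash_\JLs |M| : |A|$, whence Corollary~\ref{cor:snJLs} forces $|M| \in \SN$. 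This contradicts the infinite \JLs sequence, so no such sequence exists and $M \in \SN$, as required.

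Since the heavy machinery has been front-loaded into Corollary~\ref{cor:snJLs}, there is no genuine obstacle left in the present argument; the one point that must hold for the transport to work is that each \emph{single} \JLr step maps to a genuine (nonempty) \JLs step, rather than collapsing to an identity under erasure. This is precisely guaranteed by the shape of Lemma~\ref{lem:erasered}, which delivers $|M| \red_\JLs |N|$ for the single-step relation --- not merely a reflexive-transitive closure --- so that an infinite source chain yields an infinite image chain step for step. Accordingly, the real weight of the proof lives in Lemma~\ref{lem:erasered}, whose verification must traverse every \JLr reduction rule and confirm, in particular, that the trail bookkeeping discarded by $|\cdot|$ (the augmented trail of \textsc{B-$\beta_\Box$} and the congruence context $\cQ_\cF$ surrounding each contraction) never degrades a \JLr contraction into a \JLs identity. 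Granted that lemma together with Lemma~\ref{lem:erasetchk} and Corollary~\ref{cor:snJLs}, the statement follows in a single line.
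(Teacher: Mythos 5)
Your proposal is correct and follows exactly the paper's argument: combine Lemma~\ref{lem:erasetchk}, Corollary~\ref{cor:snJLs}, and Lemma~\ref{lem:erasered}, with the erasure of an infinite \JLr reduction sequence yielding an infinite \JLs sequence and hence a contradiction. Your observation that the single-step (rather than reflexive-transitive) form of Lemma~\ref{lem:erasered} is what makes the transport work is exactly the right point to flag.
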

\begin{proof}
By combining Corollary~\ref{cor:snJLs} with Lemma~\ref{lem:erasetchk} and Lemma~\ref{lem:erasered}.
\end{proof}

\end{document}
